\documentclass[letterpaper, 10pt, conference]{IEEEtran}
\usepackage[cmex10]{amsmath}
\usepackage{amsthm, amssymb, cite, graphicx, epsfig, url, color, setspace,epstopdf,subfigure}
\usepackage{algorithm}
\usepackage{algpseudocode}
\usepackage{verbatim}


\newcommand{\ignore}[1]{}

\newtheorem{theorem}{Theorem}[section]

\newcommand\mygeqa{\mathrel{\overset{\makebox[0pt]{\mbox{\normalfont\tiny\sffamily (a)}}}{\geq}}}
\newcommand\mygeqb{\mathrel{\overset{\makebox[0pt]{\mbox{\normalfont\tiny\sffamily (b)}}}{\geq}}}
\newcommand\myeqa{\mathrel{\overset{\makebox[0pt]{\mbox{\normalfont\tiny\sffamily (a)}}}{=}}}

\newcommand\myleqa{\mathrel{\overset{\makebox[0pt]{\mbox{\normalfont\tiny\sffamily (a)}}}{\leq}}}
\newcommand\myleqb{\mathrel{\overset{\makebox[0pt]{\mbox{\normalfont\tiny\sffamily (b)}}}{\leq}}}

\newcommand\myrightarrowa{\mathrel{\overset{\makebox[0pt]{\mbox{\normalfont\tiny\sffamily (a)}}}{\Rightarrow}}}

\IEEEoverridecommandlockouts

\addtolength{\textfloatsep}{-0.05cm}
\addtolength{\abovecaptionskip}{-2pt}
\addtolength{\belowcaptionskip}{-2pt}
\addtolength{\abovedisplayskip}{-2pt}
\addtolength{\belowdisplayskip}{-2pt}

\begin{document}

\title{An Online Algorithm for Power-proportional Data Centers with Switching Cost}

\author{\IEEEauthorblockN{Ming Zhang}
\IEEEauthorblockA{Department of CSE\\
The Ohio State University\\}
\and
\IEEEauthorblockN{Zizhan Zheng}
\IEEEauthorblockA{Department of Computer Science\\
Tulane University\\}
\and
\IEEEauthorblockN{Ness B. Shroff}
\IEEEauthorblockA{Department of ECE and CSE\\
The Ohio State University}
}

\maketitle

\begin{abstract}
Recent studies have shown that power-proportional data centers can save energy cost by dynamically ``right-sizing'' the data centers based on real-time workload. More servers are activated when the workload increases while some servers can be put into the sleep mode during periods of low load. In this paper, we revisit the dynamic right-sizing problem for heterogeneous data centers with various operational cost and switching cost. We propose a new online algorithm based on a regularization technique, which achieves a better competitive ratio compared to the state-of-the-art greedy algorithm in \cite{lin2012online}. We further introduce a switching cost offset into the model and extend our algorithm to this new setting. Simulations based on real workload and renewable energy traces show that our algorithms outperform the greedy algorithm in both settings.
\end{abstract}

\setlength{\textfloatsep}{20pt}
\section{Introduction}\label{sec:intro}
Internet-scale services like web-mail, live streaming, online gaming and social networks usually have millions or even billions of active users everyday. Providers like Google, Amazon and Facebook, in order to maintain the reliability, accessibility and guaranteed performance of their systems, have deployed numerous large data centers including massive number of servers, causing a huge amount of electricity and cooling cost. Based on \cite{koomey2011growth}, the electricity consumption of large data centers has accounted for $1.3\%$ of all the electricity use of the world and almost $2\%$ of the United States in 2010.

Recent research \cite{qureshi2009cutting,pakbaznia2009minimizing,rao2010minimizing,stanojevic2010distributed} shows that the energy cost can be significantly reduced by dynamically distributing the workload to various data centers based on the idea of ``Geographical Load Balancing'' (GLB) and ``Right-sizing'' to make the data center more power-proportional \cite{chase2001managing,lin2013dynamic,abts2010energy}. Specifically, the central load balancer dynamically dispatches the workload requests to geographically located data centers that consist of thousands of servers. Each data center dynamically adjusts the number of active servers to serve the requests so that during low-load period, servers that do not have jobs transfer to the power-saving mode or are shut down completely after data and operation states are reserved.

In \cite{lin2012online,lin2013dynamic}, Lin et al., propose a cost minimizing model for the ``right-sizing'' of data centers incurring both operational cost and switching cost. Their model is a general convex optimization problem where the objective function consists of two parts representing the operational cost and switching cost, respectively. Examples are given to show how concrete energy and delay costs of data centers can fit into their model. The operational cost is modeled as a time-dependent convex function and a linear function is used to represent the switching cost of switching certain number of servers from power-saving mode to active mode to serve the increasing workload \cite{barroso2007case}. The switching cost is incurred only when the number of active servers increase. Such switching cost not only includes the total energy cost, but also delay in data migration, increased wear-and-tear on servers and the risk involved in server mode toggling. A 3-competitive online algorithm \cite{lin2013dynamic} is proposed for the case of a single server. In \cite{lin2012online}, Lin et al. consider a more general setting with multiple heterogeneous servers and look-ahead information, and propose the AFHC online algorithm that is $(1+\frac{\beta}{e_0})$-competitive where $\beta$ is the maximum value of unit switching cost and $e_0$ refers to the minimum unit operational cost of all data centers. The algorithm reduces to the simple greedy algorithm when the look-ahead window size is zero. The two online algorithms are the first attempts to deal with the online convex right-sizing problem with switching cost and provide performance guarantees. However, the 3-competitive algorithm only works for the single-server setting, while the greedy algorithm can have a large competitive ratio when the minimum value of the unit operating cost $e_0$ is very small compared to the switching cost.

In this paper, we revisit the right-sizing problem studied in~\cite{lin2012online} and propose a better algorithm. We consider a system with multiple data centers located in different places. The operational cost and switching cost of each data center vary based on the local energy prices, the availability of renewable energy, and other factors such as energy storage and servers' wear-and-tear cost. The information of the workload and cost functions of each data center are both revealed only at the beginning of each time slot. There is no look-ahead window, meaning that information for future time slots is not available at the central load balancer. Real-time workload demand is dispatched to different data centers at each time slot by the central load balancer which tries to minimize the total cost for all time slots.

We develop a new online algorithm based on the regularization technique proposed in \cite{buchbinder2014competitive} for the right-sizing problem. We show that our online regularization algorithm achieves a better competitive ratio compared to the greedy algorithm in~\cite{lin2012online}. We further extend our system model by introducing a new time-dependent parameter, called switching cost offset, which enables a data center to serve the increase in workload demand without incurring any switching cost when the increase is less than the offset parameter. This new parameter is meaningful since each data center may have access to some local renewable energy source or energy storage, which allows it to activate some number of sleeping servers by paying a negligible cost. In addition, the switching cost offset also includes the delay tolerance during data migration when servers are activated or turned down and the cost compensation by certain protection mechanism to reduce the server state toggling cost. To the best of our knowledge, this is the first work that considers such a switching cost offset. We propose another online regularization algorithm with guaranteed performance. 

Our main contribution can be summarized as follows:
\begin{enumerate}
\item We propose an online regularization algorithm for the right-sizing problem of multiple heterogenous data centers with various operational cost and switching cost. We prove a competitive ratio of our algorithm in terms of the switching cost and the operational cost functions, which is always better than that of the greedy algorithm in~\cite{lin2012online}.
\item We consider an extension of the right-sizing problem by introducing a switching cost offset into the model, and propose an online algorithm with guaranteed performance. Our algorithm is the first attempt to deal with this important extension.
\end{enumerate}

The rest of the paper is organized as follow: Section~\ref{sec:related} discusses the related work on energy cost minimization of data centers and the regularization method. Section~\ref{sec:model} introduces our general system model and the online algorithm is proposed in Section~\ref{sec:onlineconvex}. We then discuss the model with switching cost offset and the corresponding algorithm in Section~\ref{sec:reg}. Numerical results are given in Section~\ref{sec:num} and we conclude the paper in Section~\ref{sec:conclusion}.

\section{Related Work}\label{sec:related}
There are multiple recent works \cite{chen2005managing,gandhi2009optimal,horvath2008multi,qureshi2009cutting,pakbaznia2009minimizing,rao2010minimizing,stanojevic2010distributed,chase2001managing,toosi2017renewable} discussing the ``Geographical Load Balancing'' and the ``Right-sizing'' problem for data centers. The most relevant works are \cite{lin2012online,lin2013dynamic}. In \cite{lin2013dynamic}, Lin et al., considered the case of a single data center where the data center determines the workload (number of active servers) based on general convex operational cost function and linear switching cost. They proposed the ``Laze Capacity Provisioning'' online algorithm by utilizing the structure of the optimal offline solution, which achieves a competitive ratio of $3$. Later, Bansal et al. \cite{bansal_et_al:LIPIcs:2015:5297} improved the competitive ratio to $2$ by proposing a new randomized online algorithm. However, the online algorithms in both  \cite{lin2013dynamic} and \cite{bansal_et_al:LIPIcs:2015:5297} only work for the single data center case and their performance can be arbitrarily bad for the case with multiple heterogenous data centers. In \cite{lin2012online}, Lin et al., considered the ``right-sizing'' problem for the heterogeneous data center model and proposed the ``Averaging Fixed Horizon Control'' algorithm. They proved that their algorithms achieves a competitive ratio that depends on the switching cost and the convex operational cost function. In this work, we revisit the heterogeneous case and propose a new online algorithm with a better performance guarantee.

There are extensive studies on online algorithm design for cloud resource management \cite{jennings2015resource,weingartner2015cloud} and real-time dispatch \cite{zhao2016online,zheng2014online,chen2011optimal}. More specifically, \cite{buchbinder2014competitive} first introduced the concept of regularization for online algorithm design and proved that the online algorithm with a regularization term can achieve a competitive ratio proportional to $\log N$ where $N$ is number of variables. The same technique is also applied in \cite{buchbinder2014competitiveapplication} to study the problem of online restricted caching and matroid caching. In \cite{zhang2015online}, Zhang et al. investigate online resource management for Cloud-based content delivery networks. They proposed an efficient online algorithm by using the regularization technique and proved its performance guarantee. However, the cost function is linear in all these works and none of them take the switching cost offset into consideration. In this work, we consider the more general and practical case of convex operational cost and compare the performance of regularization based online algorithms and the greedy algorithms.

\section{System Model}\label{sec:model}
In this section, we discuss our system model of multiple heterogeneous data centers located in various places and the overall optimization problem.

We study a system consisting of a central load balancer and $N$ heterogeneous data centers, each with thousands of servers located in different places. The servers in each data center are assumed to be homogeneous as in previous work. The central load balancer distributes the workload to data centers and each data center either activates or deactivates a certain number of servers to serve the workload. The cost of each data center for serving the workload consists of two parts, operational cost and switching cost, both may vary across data centers.

In this paper, we consider general classes of cost functions. The operational cost is modelled by a time-dependent non-decreasing convex function $f_{i,t}(\cdot)$, where $f_{i,t}(s_i(t))$ refers to the total operational cost for data center $i$ with workload $s_i(t)$ at time $t$, which includes the energy cost for serving the workload as well as the cost associated with data transmission and delay, etc. We assume that $f_{i,t}(\cdot)$ is continuously differentiable. In \cite{lin2013dynamic}, Lin et al. provided concrete examples to show how the real data center cost can be fitted into this general convex operational cost model.

In addition to operational cost, data centers incur a switching cost when servers are switched on, which includes the energy cost of transferring server states, data migration latency, server state toggling risk, and the wear-and-tear cost~\cite{lin2013dynamic}. In addition, by reducing the number of active servers and computing resources, the user experience may be degraded, resulting a decline in revenue \cite{mao2017mobile}, which can also be captured by the switching cost. We only take into consideration the switching cost when the workload increases, incurring a cost of $\beta_i(s_i(t)-s_i(t-1))$ since turning off servers usually has a negligible cost as in \cite{wang2012characterizing,lin2013dynamic}.

We further extend the model to the situation where each data center has access to local renewable energy, or has a certain protection mechanism and delay tolerance. For instance, protection mechanisms  can reduce the wear-and-tear cost and the corresponding risk involved in the state toggling of servers~\cite{coskun2009evaluating}, while delay tolerant workload is less sensitive to the latency for toggling servers out of power-saving mode. On the other hand, timing-varying renewable energy supply can help reduce both the operational and switching costs.\footnote{In this work, we assume that the allocation of renewable energy for reducing operating cost and that for reducing switching cost follows a pre-determined scheme, and incorporate the former into the operational cost function.} To model these effects, we introduce a time-dependent offset parameter $r_i(t)$ into our model such that no switching cost is incurred when the increase of workload is less than or equal to $r_i(t)$. Such an offset parameter allows each data center to activate a certain number of servers without incurring any switching cost.

We consider a time-slotted system from $t=1$ to $t=T$, and an online setting where all the future information including the operational costs, switching costs, and the workload is unknown. The central load balancer is only aware of all the parameters at the current and past time slots. Our objective is to minimize the overall cost by dynamically dispatching the workload to each data center at the beginning of each time slot under the constraint that the total demand must be satisfied at each time slot.

At the beginning of each time-slot $t$, the workload $D(t)$, operational cost function $f_{i,t}(\cdot)$, and the offset $r_i(t)$ are revealed. The central load balancer then distributes workload $s_i(t)$ to data center $i$, incurring a operational cost $\sum_{i=1}^N f_{i,t}(s_i(t))$ and a switching cost $\sum_{i=1}^N\beta_i(s_i(t)-s_i(t-1))$ at time slot $t$. Note that we do not explicitly include a capacity constraint for each data center in (\ref{eqa:isomin}). This can be easily modeled by setting the operational cost $f_{i,t}$ to infinity when the workload assigned to data center $i$ exceeds its capacity. As long as $f_{i,t}(s)$ is continuously differentiable when $s$ is within the capacity region, all the results in this paper remain valid. The objective of the central load balancer is to minimize the overall operational cost and switching cost among all time slots as in (\ref{eqa:isomin}).
\begin{equation}\label{eqa:isomin}
\begin{aligned}
\min_{s_i(t)} \sum_{t=1}^{T} &\sum_{i=1}^N \left[f_{i,t}(s_i(t))+\beta_i(s_i(t)-s_i(t-1)-r_i(t))^+\right]\\
        s.t.\ & \sum_{i=1}^N s_i(t)\geq D(t)\ \forall t\\
              & s_i(t)\geq 0\ \forall i,t\\
              & s_i(0)=0\ \forall i
\end{aligned}
\end{equation}
where $(x)^+=\max\{0,x\}$. 

We use competitive ratio as the performance metric throughout this paper. Denote $A(1:t)$ as the input information (e.g., the workload $D(t)$, the operational cost function $f_{i,t}(\cdot)$ and the switching cost offset $r_i(t)$) from time slot $1$ to time slot $t$. For an online algorithm $\pi$, the decision $s_i(t)\ \forall i$ at each time slot $t$ can only be based on input $A(1:t)$. Let $C_{\pi}(A(1:T))$ be the total cost of algorithm $\pi$ and we compare it with the total cost of the optimal offline solution $C_{opt}(A(1:T))$ which is obtained by solving (\ref{eqa:isomin}). Then, the competitive ratio of algorithm $\pi$ is given by
\begin{displaymath}
CR_{\pi}=\max_{A(1:T)}\frac{C_{\pi}(A(1:T))}{C_{opt}(A(1:T))}
\end{displaymath}

Table~\ref{tbl:notation} summarizes the notations used in the paper.
\begin{table}[!t]
\centering
\caption{List of Notations}
\renewcommand{\arraystretch}{1.2}
\label{tbl:notation}
\small{
\begin{tabular}{|c|l|}
\hline \multicolumn{1}{|l|}{Symbol} & \multicolumn{1}{|c|}{Meaning} \\
\hline
$T$ & Number of time slots\\
\hline
$N$ & Number of data centers\\
\hline
$D(t)$ & Total workload demand at time $t$\\
\hline
$D_{max}$ & $\max_t D(t)$\\
\hline
$f_{i,t}(\cdot)$ & Operational cost of data center $i$ at time $t$\\
\hline
$\beta_i$ & Coefficient of switching cost for data center $i$\\
\hline
$\beta$ & $\max_i \beta_i$\\
\hline
$r_i(t)$ & Switching cost offset for data center $i$ at time $t$\\
\hline
$s_i(t)$ & Workload dispatched to data center $i$ at time $t$\\
\hline
\end{tabular}
}
\end{table}

\section{Online Regularization Algorithm}\label{sec:onlineconvex}
In this section, we study the right-sizing problem without switching offset, that is $r_i(t)=0\ \forall i,t$. We first review the greedy algorithm (a.k.a the AFHC algorithm in \cite{lin2012online} without look-ahead information) and its performance guarantee. Then, we present our online regularization based algorithm and compare the competitive ratios of the two algorithms.

\subsection{The Greedy Algorithm}
Lin et al. \cite{lin2012online} proposed the AFHC algorithm and analyzed the pros and cons of AFHC compared to the classic Receding Horizon Control (RHC) algorithm. They claimed that the AFHC algorithm can outperform the RHC algorithm when there are multiple heterogeneous data centers. Both AFHC and RHC work for the case with look-ahead information. When there is no look-ahead information as we consider in the paper, both algorithms reduce to the simple greedy algorithm. That is, both algorithm compute the load assignment $\tilde{s}_i(t)$ by solving the following optimization problem in each time-slot.

\begin{equation}\label{eqa:rhc}
\begin{aligned}
\tilde{s}_i(t)&=\text{argmin}_{s_i(t)} \sum_{i=1}^N\left[f_{i,t}(s_i(t))+\beta_i(s_i(t)-\tilde{s}_i(t-1))^+\right]\\
s.t.&\ \sum_{i=1}^N s_i(t) \geq D(t)\ \forall t\\
    &\ s_i(t)\geq 0\ \forall i,t
\end{aligned}
\end{equation}

Let $e_{0,i}$ denote the minimum positive constant such that $f_{i,t}(x)\geq e_{0,i} x$, $\forall x,t$. 
The following result is proved in \cite{lin2012online}:
\begin{theorem}\label{thm:rhcafhc}
The greedy algorithm is $(1+\beta/e_0)$-competitive where $\beta=\max_i\{\beta_i\}$ and $e_0=\min_i\{e_{0,i}\}$.
\end{theorem}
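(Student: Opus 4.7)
The plan is to bound the per-slot cost of the greedy algorithm by a constant multiple of the per-slot operational cost of the offline optimum, and then sum over $t$. Let $\{s_i^*(t)\}$ denote an optimal offline assignment and let $\{\tilde{s}_i(t)\}$ denote the greedy assignment obtained from (\ref{eqa:rhc}). The key observation is that at every time slot $t$, the assignment $s_i^*(t)$ is feasible for the greedy one-shot problem (it satisfies $\sum_i s_i^*(t)\geq D(t)$ and $s_i^*(t)\geq 0$), so by optimality of $\tilde{s}_i(t)$ for (\ref{eqa:rhc}) we obtain
\begin{equation*}
\sum_{i=1}^N\!\left[f_{i,t}(\tilde{s}_i(t))+\beta_i(\tilde{s}_i(t)-\tilde{s}_i(t-1))^+\right]\leq \sum_{i=1}^N\!\left[f_{i,t}(s_i^*(t))+\beta_i(s_i^*(t)-\tilde{s}_i(t-1))^+\right].
\end{equation*}

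Next I would discard the baseline $\tilde{s}_i(t-1)$ on the right-hand side. Because $\tilde{s}_i(t-1)\geq 0$ (the greedy algorithm always produces nonnegative assignments, with $\tilde{s}_i(0)=0$) and $s_i^*(t)\geq 0$, we have $(s_i^*(t)-\tilde{s}_i(t-1))^+\leq s_i^*(t)$, hence $\beta_i(s_i^*(t)-\tilde{s}_i(t-1))^+\leq \beta\, s_i^*(t)$ where $\beta=\max_i\beta_i$. Now I would invoke the hypothesis that $f_{i,t}(x)\geq e_{0,i}x\geq e_0 x$ for every $i,t,x$, which gives $s_i^*(t)\leq f_{i,t}(s_i^*(t))/e_0$, and therefore
\begin{equation*}
\sum_{i=1}^N\!\left[f_{i,t}(\tilde{s}_i(t))+\beta_i(\tilde{s}_i(t)-\tilde{s}_i(t-1))^+\right]\leq \left(1+\frac{\beta}{e_0}\right)\sum_{i=1}^N f_{i,t}(s_i^*(t)).
\end{equation*}

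Summing over $t=1,\ldots,T$, the left-hand side becomes exactly the greedy cost $C_G$, while on the right we get $(1+\beta/e_0)$ times the operational portion of the offline optimum, which is itself upper-bounded by $C_{opt}$ since the offline switching cost is nonnegative. This yields $C_G\leq(1+\beta/e_0)\,C_{opt}$, as claimed.

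The argument is almost entirely mechanical; the only substantive trick is the bound $(s_i^*(t)-\tilde{s}_i(t-1))^+\leq s_i^*(t)$ that lets us dispose of the $\tilde{s}_i(t-1)$ baseline without tracking the trajectory of greedy across time slots, and the conversion from a workload bound to an operational-cost bound via $e_0$. I expect no real obstacle; the main care point is verifying that $s_i^*(t)$ is indeed a legitimate competitor in the one-shot minimization of (\ref{eqa:rhc}) at every $t$, which follows immediately from feasibility of the offline optimum.
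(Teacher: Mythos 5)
Your proof is correct and complete. The paper does not actually prove Theorem~\ref{thm:rhcafhc} itself---it cites \cite{lin2012online}---but your argument is exactly the standard one there, i.e., the zero-lookahead specialization of the AFHC analysis: compare the greedy one-shot objective at time $t$ against the feasible competitor $s_i^*(t)$, bound the residual switching term via $(s_i^*(t)-\tilde{s}_i(t-1))^+\leq s_i^*(t)$ and $\beta s_i^*(t)\leq (\beta/e_0) f_{i,t}(s_i^*(t))$, and sum over $t$, dropping the nonnegative offline switching cost.
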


We note that since $f_{i,t}(x)/x$ may approach to $0$ when $x$ is close to $0$, e.g. when $f_{i,t}(x)=x^\alpha$ for $\alpha>1$, the value of $\beta/e_0$ can be huge. Moreover, our simulation results using real data from Google Cloud platform (see Section~\ref{sec:num}) indicate that the greedy algorithm may cause unnecessary frequent server switching, leading to bad performance.  
To tackle these issues, we present a new online algorithm based on a regularization technique for problem (\ref{eqa:isomin}) in the following subsection, which achieves a better competitive ratio and shows better empirical performance. 

\subsection{Online Regularization Algorithm}
Our algorithm adopts the novel framework proposed in~\cite{buchbinder2014competitive} for designing competitive online algorithms. The algorithm is essentially greedy by solving a convex optimization problem in each round, where the objective function includes both the operating cost and the regularized switching cost. As in~\cite{buchbinder2014competitive}, we use the relative entropy plus a linear term as the regularizer. The regularizer for two (discrete) distributions $\boldsymbol{\theta}$ and $\boldsymbol{u}$ is defined as $\sum_i \theta_i\ln(\theta_i/u_i)+\theta_i-u_i$.
But unlike the regularization algorithm in \cite{buchbinder2014competitive} where the operational cost function is linear and all the variables are within the range of $[0,1]$, our algorithm deals with convex functions and general non-negative domains for all the variables $s_i(t)$. We present our online algorithm in Algorithm~\ref{alg:reg2}.

\begin{algorithm}[t]
\caption{Online Regularization with Convex Operational Cost}
\label{alg:reg2}
\begin{algorithmic}[1]
\State Input: $\epsilon>0$ and $\eta=\ln(1+ND_{max}/\epsilon)$
\State Initialization: $\tilde{s}_i(0)=0$ for all $i=1,\cdots,N$
\For {$t = 1\ to\ T$}
\begin{equation}\label{eqa:regalgconv}
\hspace{-2.5ex}
\begin{aligned}
&\tilde{s}_i(t)=\text{argmin}_{s_i(t)\in P_t}\bigg\{f_{i,t}(s_i(t))\\
              &+\frac{1}{\eta}\sum_{i=1}^N\beta_i\left[(s_i(t)+\epsilon/N)\ln\left(\frac{s_i(t)+\epsilon/N}{\tilde{s}_i(t-1)+\epsilon/N}\right)-s_i(t)\right]\bigg\}
\end{aligned}
\end{equation}
where $P_t \triangleq \{s_i(t)|\sum_{i=1}^Ts_i(t)\geq D(t), s_i(t) \geq 0 \ \forall i\}$
\EndFor
\end{algorithmic}
\end{algorithm}

In Algorithm~\ref{alg:reg2}, we assume that $D_{max}=\max_t D(t)$ is known in advance and $N$ refers to the number of data centers. We compute the workload dispatch $\tilde{s}_i(t)$ by solving the convex optimization problem (\ref{eqa:regalgconv}) in each time-slot, where $\epsilon$ is a parameter that can be adjusted. Since  (\ref{eqa:regalgconv}) is a continuous convex optimization problem, it can be solved in polynomial time. Algorithm~\ref{alg:reg2} computes $\tilde{s}_i(t)$ using only the information available at the current time-slot and $\tilde{s}_i(t-1)$.

To study the performance of our online algorithm, we adopt a primal-dual analysis similar to~\cite{buchbinder2014competitive}. Below we first provide an overview of the main idea of the primal-dual technique involved in the analysis. We start with the primal problem (\ref{eqa:isomin}) with $r_i(t)=0\ \forall i,t$, which is equivalent to the following where we introduce variables $z_i(t)$ so that the objective function is continuous:

\begin{equation}\label{eqa:isomineqv}
\begin{aligned}
\min_{s_i(t)} \sum_{t=1}^{T} &\sum_{i=1}^N \left(f_{i,t}(s_i(t))+\beta_iz_i(t)\right)\\
        s.t.\ & \sum_{i=1}^N s_i(t)\geq D(t)\ \forall t\\
              & s_i(t)\geq 0\ \forall i,t\\
              & z_i(t)\geq s_i(t)-s_i(t-1)\ \forall i,t\\
              & z_i(t)\geq 0\ \forall i,t
\end{aligned}
\end{equation}

The Lagrangian function of (\ref{eqa:isomineqv}) is
\vspace{-3ex}
\begin{align*}
L(&\mu_{i,t},\lambda_t,l_{i,t},k_{i,t},s_i(t),z_i(t))\\
 &=\sum_{t=1}^{T}\sum_{i=1}^N \left[f_{i,t}(s_i(t))+\beta_iz_i(t)\right]+\sum_t \lambda_t[D(t)-\sum_i s_i(t)]\\
 &+\sum_t\sum_i\mu_{i,t}\left(s_i(t)-s_i(t-1)-z_i(t)\right)\\
 &-\sum_t\sum_i\left(l_{i,t}s_i(t)-k_{i,t}z_i(t)\right)\\
 &=\sum_t\sum_i\left[f_{i,t}(s_i(t))+(\mu_{i,t}-\mu_{i,t+1}-\lambda_t-l_{i,t})s_i(t)\right]\\
 &+\sum_t\sum_i\left(\beta_i-\mu_{i,t}-k_{i,t}\right)z_i(t)+\sum_t \lambda_tD(t)
\end{align*}
where $\lambda_t$, $l_{i,t}$, $\mu_{i,t}$ and $k_{i,t}$ are the Lagrangian multipliers for the four constraints in (\ref{eqa:isomineqv}). Thus, the dual function of (\ref{eqa:isomineqv}) is

\begin{align}
D(&\mu_{i,t},\lambda_t,l_{i,t},k_{i,t})=\min_{s_i(t),z_i(t)}L(\mu_{i,t},\lambda_t,l_{i,t},k_{i,t},s_i(t),z_i(t)) \nonumber \\
&=\min_{s_i(t)}\sum_t\sum_i\left[f_{i,t}(s_i(t))+(\mu_{i,t}-\mu_{i,t+1}-\lambda_t-l_{i,t})s_i(t)\right] \nonumber \\
&+\min_{z_i(t)}\sum_t\sum_i\left(\beta_i-\mu_{i,t}-k_{i,t}\right)z_i(t)+\sum_t \lambda_tD_t \label{eqa:dualconvex}
\end{align}

To establish a relation between the optimal offline solution and the online solution, the main idea is to assign the dual variables with values $\hat{\mu}_{i,t}$, $\hat{\lambda}_t$ $\hat{l}_{i,t}$ and $\hat{k}_{i,t}$ based on the optimal online solution $\tilde{s}_i(t)$. The weak duality tells us
\begin{displaymath}
\max_{\mu_{i,t},\lambda_t,l_{i,t},k_{i,t}}D(\mu_{i,t},\lambda_t,l_{i,t},k_{i,t}) \leq \text{value of (\ref{eqa:isomin})}
\end{displaymath}




Therefore, if we can prove
\begin{displaymath}
\text{Total Online Cost} \leq \Lambda \cdot D(\hat{\mu}_{i,t},\hat{\lambda}_t,\hat{l}_{i,t},\hat{k}_{i,t})
\end{displaymath}

\noindent for some $\Lambda>1$, then our online algorithm is $\Lambda$-compeitive, that is,  
\begin{displaymath}
\text{Total Online Cost} \leq \Lambda\cdot\text{value of (\ref{eqa:isomin})}
\end{displaymath}

All the theorems in this section and Section~\ref{sec:reg} are based on this idea. We first show that Algorithm~\ref{alg:reg2} has a smaller competitive ratio compared with the greedy algorithm.

\begin{theorem}\label{thm:oldratio}
Algorithm~\ref{alg:reg2} is $(1+\frac{\beta}{e_{0}+C})$-competitive where
\begin{equation}
C\triangleq \frac{\sum_{t=1}^T\sum_{i=1}^N \frac{\beta_i}{\eta}\ln\left(\frac{\tilde{s}_i(t)+\epsilon/N}{\tilde{s}_i(t-1)+\epsilon/N}\right)\tilde{s}_i(t)}{\sum_{t=1}^T D(t)}
\end{equation}
and $C\in [0,\beta]$.
\end{theorem}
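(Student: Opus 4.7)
The plan is to use the primal-dual schema outlined right after problem (\ref{eqa:isomineqv}). First I would derive the KKT stationarity condition for the per-round convex subproblem (\ref{eqa:regalgconv}). A direct computation shows that the derivative of the entropic-plus-linear regularizer in $s_i(t)$ is exactly $\frac{\beta_i}{\eta}\ln\frac{s_i(t)+\epsilon/N}{\tilde{s}_i(t-1)+\epsilon/N}$, so stationarity gives
$$f'_{i,t}(\tilde{s}_i(t)) + \frac{\beta_i}{\eta}\ln\frac{\tilde{s}_i(t)+\epsilon/N}{\tilde{s}_i(t-1)+\epsilon/N} = \tilde{\lambda}_t + \tilde{l}_{i,t},$$
where $\tilde{\lambda}_t \geq 0$ is the per-round multiplier for $\sum_i s_i(t)\geq D(t)$, $\tilde{l}_{i,t}\geq 0$ is the multiplier for $s_i(t)\geq 0$, and the usual complementary slackness holds. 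I would also record that the demand constraint is tight, $\sum_i \tilde{s}_i(t)=D(t)$, since $f_{i,t}$ is non-decreasing and the regularizer does not reward overshooting.

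Next, following the recipe of \cite{buchbinder2014competitive}, I would build a dual feasible solution for (\ref{eqa:isomineqv}) by scaling the per-round multipliers: set $\hat{\lambda}_t=\tilde{\lambda}_t/\Lambda$, $\hat{l}_{i,t}=\tilde{l}_{i,t}/\Lambda$, pick $\hat{\mu}_{i,t}$ as a telescoping partial sum satisfying $\hat{\mu}_{i,t}-\hat{\mu}_{i,t+1}=\frac{\beta_i}{\eta\Lambda}\ln\frac{\tilde{s}_i(t)+\epsilon/N}{\tilde{s}_i(t-1)+\epsilon/N}$ with $\hat{\mu}_{i,T+1}=0$, and let $\hat{k}_{i,t}=\beta_i-\hat{\mu}_{i,t}$. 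The boundedness of $\hat{\mu}_{i,t}$ by $\beta_i$ (i.e.\ dual feasibility on the $z$-coefficient) is guaranteed by the choice $\eta=\ln(1+ND_{max}/\epsilon)$ and the a priori bound $\tilde{s}_i(t)\leq D_{max}$, which keeps every single log term at most $\eta$. Substituting into (\ref{eqa:dualconvex}), the coefficient of $s_i(t)$ in the inner minimum collapses, via the per-round KKT identity, to $-f'_{i,t}(\tilde{s}_i(t))/\Lambda$, and convexity of $f_{i,t}$ then gives $\min_s\{f_{i,t}(s)-f'_{i,t}(\tilde{s}_i(t))s/\Lambda\}$ in terms of $f_{i,t}(\tilde{s}_i(t))$ and $f'_{i,t}(\tilde{s}_i(t))\tilde{s}_i(t)$, producing a clean lower bound on the dual objective.

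On the primal side, I would bound the online switching cost using the elementary inequality $a\ln(a/b)\geq a-b$ with $a=\tilde{s}_i(t)+\epsilon/N$, $b=\tilde{s}_i(t-1)+\epsilon/N$, which converts switching cost into a positive multiple of the log-weighted term appearing in $C$. Combining this with the operational cost bound $f_{i,t}(\tilde{s}_i(t))\geq e_{0,i}\tilde{s}_i(t)\geq e_0\tilde{s}_i(t)$ and using $\sum_i\tilde{s}_i(t)=D(t)$, I would match terms and verify that the total online cost is at most $\Lambda$ times the dual value precisely when $\Lambda=1+\beta/(e_0+C)$; weak duality then yields the competitive ratio.

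Finally, to check $C\in[0,\beta]$: the upper bound follows immediately from $\frac{\beta_i}{\eta}\ln\frac{\tilde{s}_i(t)+\epsilon/N}{\tilde{s}_i(t-1)+\epsilon/N}\leq\beta$ applied termwise, together with $\sum_i\tilde{s}_i(t)=D(t)$. The lower bound $C\geq 0$ follows from a telescoping argument: by $a\ln(a/b)\geq a-b$,
$$\sum_{t=1}^{T}(\tilde{s}_i(t)+\epsilon/N)\ln\frac{\tilde{s}_i(t)+\epsilon/N}{\tilde{s}_i(t-1)+\epsilon/N}\geq \tilde{s}_i(T),$$
and subtracting the telescoping correction $\frac{\epsilon}{N}\ln\frac{\tilde{s}_i(T)+\epsilon/N}{\epsilon/N}$, which is dominated by $\tilde{s}_i(T)$ via $u\geq\ln(1+u)$, gives $\sum_t\tilde{s}_i(t)\ln\frac{\tilde{s}_i(t)+\epsilon/N}{\tilde{s}_i(t-1)+\epsilon/N}\geq 0$. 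The main obstacle I expect is the careful calibration of $\hat{\mu}_{i,t}$: the telescoping sum has to remain in $[0,\beta_i]$ for every $t$ while also making the inner minimization align exactly with $\tilde{s}_i(t)$ via the per-round KKT, and it is the interplay between this dual feasibility and the choice $\eta=\ln(1+ND_{max}/\epsilon)$ that forces the competitive ratio into the claimed form.
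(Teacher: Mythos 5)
Your skeleton matches the paper's: per-round KKT conditions for (\ref{eqa:regalgconv}), a telescoping choice of $\mu_{i,t}$ kept below $\beta_i$ by the choice $\eta=\ln(1+ND_{max}/\epsilon)$, evaluation of the inner minimum via convexity, and weak duality; and your proof of $C\in[0,\beta]$ is correct (your termwise-telescoping argument for $C\geq 0$ is in fact slightly more direct than the paper's, which first aggregates over $t$ with the log-sum inequality). The genuine gap is on the primal side. You propose to bound the online switching cost via $a\ln(a/b)\geq a-b$ with $a=\tilde s_i(t)+\epsilon/N$, $b=\tilde s_i(t-1)+\epsilon/N$. That turns $\beta_i(\tilde s_i(t)-\tilde s_i(t-1))^+$ into $\beta_i(\tilde s_i(t)+\epsilon/N)\ln(a/b)$, which is $\eta$ times the corresponding term in the numerator of $C$; carrying this through yields a ratio of the form $1+O(\eta)$ (this is exactly the device used in the proof of Theorem~\ref{thm:onlinereg}, where a logarithmic ratio is the goal), not $1+\beta/(e_0+C)$, since $\eta$ has nothing to do with $\beta/(e_0+C)$. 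The step the theorem actually needs is the cruder bound $(\tilde s_i(t)-\tilde s_i(t-1))^+\leq\tilde s_i(t)$, so that the total switching cost is at most $\beta\sum_t\sum_i\tilde s_i(t)=\beta\sum_t D(t)$, while the dual lower bound is $\sum_t\sum_i\bigl[f_{i,t}(\tilde s_i(t))+\tfrac{\beta_i}{\eta}\ln(a/b)\,\tilde s_i(t)\bigr]\geq (e_0+C)\sum_t D(t)$; dividing gives the claimed ratio. Your ``match terms and verify'' sentence hides precisely this arithmetic, and with your log-based bound it does not close.

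Two secondary problems with your dual assignment. First, the $1/\Lambda$ scaling is counterproductive for convex $f_{i,t}$: the coefficient of $s_i(t)$ in the inner minimum becomes $-f'_{i,t}(\tilde s_i(t))/\Lambda$, whose minimizer is no longer $\tilde s_i(t)$, so the ``clean lower bound'' you invoke is not free. The paper keeps the duals unscaled ($\lambda_t=\tilde\lambda_t$, $l_{i,t}=\tilde l_{i,t}$, $k_{i,t}=0$), so the inner minimum is attained exactly at $\tilde s_i(t)$ and equals $f_{i,t}(\tilde s_i(t))-f'_{i,t}(\tilde s_i(t))\tilde s_i(t)$; the factor $\Lambda$ appears only in the final ratio. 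Second, anchoring $\hat\mu_{i,T+1}=0$ gives $\hat\mu_{i,t}=\tfrac{\beta_i}{\eta\Lambda}\ln\bigl(\tfrac{\tilde s_i(T)+\epsilon/N}{\tilde s_i(t-1)+\epsilon/N}\bigr)$, which is negative whenever $\tilde s_i(T)<\tilde s_i(t-1)$ and hence dual-infeasible; the paper instead takes $\mu_{i,t}=\tfrac{\beta_i}{\eta}\ln\bigl(\tfrac{D_{max}+\epsilon/N}{\tilde s_i(t-1)+\epsilon/N}\bigr)$, which has the same differences but always lies in $[0,\beta_i]$.
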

\begin{proof}
To show $C\in [0, \beta]$, we need two inequality facts: $\sum_i a_i\log(a_i/b_i)\geq (\sum_ia_i)\log(\frac{\sum_i a_i}{\sum_i b_i})\ \ \forall a_i,b_i>0$ and $a\ln(a/b)\geq a-b\ \ \forall a,b>0$. Due to limited space, we omit the details of this part. Please refer to \cite{techreport} for a proof.

Next, we show that the competitive ratio of Algorithm~\ref{alg:reg2} is $1+\frac{\beta}{e_0+C}$. We assign dual variables $\tilde{\lambda}_t$ and $\tilde{l}_{i,t}$ to the constraints $\sum_i s_i(t)\geq D(t)\ \forall t$ and $s_i(t)\geq 0\ \forall i,t$ in (\ref{eqa:regalgconv}) respectively. Since (\ref{eqa:regalgconv}) is a convex optimization problem, by applying the KKT conditions of (\ref{eqa:regalgconv}), we have for any $i$ and $t$,
\begin{equation}\label{eqa:olregpfconvex3}
\begin{aligned}
\frac{\beta_i}{\eta}\ln\left(\frac{\tilde{s}_i(t)+\epsilon/N}{\tilde{s}_i(t-1)+\epsilon/N}\right) = \tilde{\lambda}_t-f'_{i,t}(\tilde{s}_i(t))+\tilde{l}_{i,t}
\end{aligned}
\end{equation}

\noindent By setting $\lambda_t=\tilde{\lambda}_t$, $\mu_{i,t}=\frac{\beta_i}{\eta}\ln\left(\frac{D_{max}+\epsilon/N}{\tilde{s}_i(t-1)+\epsilon/N}\right)$, $k_{i,t}=0$ and $l_{i,t}=\tilde{l}_{i,t}$, and using the fact that $\beta_i\geq \mu_{i,t}$, the dual function associated with the offline problem becomes
\begin{equation}\label{eqa:olregpfconvex7}
\begin{aligned}
&D(\mu_{i,t},\lambda_t,l_{i,t},k_{i,t})=\sum_t \lambda_tD(t)\\
&+\min_{s_i(t)}\sum_t\sum_i\left[f_{i,t}(s_i(t))+(\mu_{i,t}-\mu_{i,t+1}-\lambda_t-l_{i,t})s_i(t)\right]\\
&=\sum_t \lambda_tD(t)+\min_{s_i(t)}\sum_t\sum_i\left[f_{i,t}(s_i(t))-f'_i(\tilde{s}_i(t))s_i(t)\right]\\
&=\sum_t \lambda_tD(t)+\sum_t\sum_i\left[f_{i,t}(\tilde{s}_i(t))-f'_{i,t}(\tilde{s}_i(t))\tilde{s}_i(t)\right]
\end{aligned}
\end{equation}

\noindent where the last equation follows from the convexity of $f_{i,t}(\cdot)$. Putting (\ref{eqa:olregpfconvex3}) into (\ref{eqa:olregpfconvex7}) and using the weak duality and the fact that $\tilde{s}_i(t)\tilde{l}_{i,t}=0$, we have the following
\begin{displaymath}
\begin{aligned}
&\text{Offline Cost} \geq \text{value of (\ref{eqa:olregpfconvex7})}\\
&= \sum_{t=1}^T\sum_{i=1}^N\left[f_{i,t}(\tilde{s}_i(t))+\frac{\beta_i}{\eta}\ln\left(\frac{\tilde{s}_i(t)+\epsilon/N}{\tilde{s}_i(t-1)+\epsilon/N}\right)\tilde{s}_i(t)\right]
\end{aligned}
\end{displaymath}

Thus, the competitive ratio of Algorithm~\ref{alg:reg2} becomes
\begin{displaymath}
\begin{aligned}
&CR=\frac{\text{Online Cost}}{\text{Offline Cost}}\\
 &\leq\frac{\sum_{t=1}^T\sum_{i=1}^N \left[f_{i,t}(\tilde{s}_i(t))+\beta_i(\tilde{s}_i(t)-\tilde{s}_i(t-1))^+\right]} {\sum_{t=1}^T\sum_{i=1}^N \left[f_{i,t}(\tilde{s}_i(t))+\frac{\beta_i}{\eta}\ln\left(\frac{\tilde{s}_i(t)+\epsilon/N}{\tilde{s}_i(t-1)+\epsilon/N}\right)\tilde{s}_i(t)\right]}\\
 &\leq 1+\frac{\sum_{t=1}^T\sum_{i=1}^N \beta_i\tilde{s}_i(t)} {\sum_{t=1}^T\sum_{i=1}^N \left[f_{i,t}(\tilde{s}_i(t))+\frac{\beta_i}{\eta}\ln\left(\frac{\tilde{s}_i(t)+\epsilon/N}{\tilde{s}_i(t-1)+\epsilon/N}\right)\tilde{s}_i(t)\right]}\\
 &\leq 1+\frac{\beta \sum_{t=1}^T\sum_{i=1}^N\tilde{s}_i(t)}{\sum_{t=1}^T\sum_{i=1}^N\left[e_{0,i}\tilde{s}_i(t)+\frac{\beta_i}{\eta}\ln\left(\frac{\tilde{s}_i(t)+\epsilon/N}{\tilde{s}_i(t-1)+\epsilon/N}\right)\tilde{s}_i(t)\right]}\\
 &\leq 1+\frac{\beta}{e_0+C}
\end{aligned}
\end{displaymath}
\end{proof}

Theorem~\ref{thm:oldratio} shows that  Algorithm~\ref{alg:reg2} has a smaller competitive ratio compared to the greedy algorithm whenever $C>0$. Although it is difficult to get the accurate value of $C$ due to the complex correlation between $\epsilon$ and $\tilde{s}_i(t)$, we have $C=0$ when all $s_i(t)$ are equal and $C=\beta$ when $N=T=1$ and $\tilde{s}_i(1)=D_{max}$. Thus, the new bound in Theorem~\ref{thm:oldratio} can be close to $1+\frac{\beta}{e_0+\beta}$, which is very helpful especially when $e_0$ is very small compared to $\beta$. Further, the regularization algorithm outperforms the greedy algorithm in our real-data based simulation in Section~\ref{sec:num}.

\section{Online Regularization Algorithm with Switching Cost Offset}\label{sec:reg}
In this section, we consider the case when $r_i(t)>0$, that is, when there is a non-zero offset for the switching cost. We note that $r_i(t)$ can capture the saving from renewable energy access, the delay tolerance of computing workload, as well as the reduced server's wear-and-tear cost and state-toggling risk with certain protection mechanisms as discussed in Section~\ref{sec:model}. Further, we assume that the operational cost function is linear in this section and let $c_i(t)$ denote the unit operational cost. Problem (\ref{eqa:isomin}) then becomes
\begin{equation}\label{eqa:isomin2}
\begin{aligned}
\min_{s_i(t)} \sum_{t=1}^{T} &\sum_{i=1}^N \left[c_i(t)s_i(t)+\beta_i(s_i(t)-s_i(t-1)-r_i(t))^+\right]\\
        s.t.\ & \sum_{i=1}^N s_i(t)\geq D(t)\ \forall t\\
              & s_i(t)\geq 0\ \forall i,t\\
              & s_i(0)=0\ \forall i
\end{aligned}
\end{equation}
The problem with a general convex operational cost and a non-zero switching offset remains open.

We first note that Algorithm~\ref{alg:reg2} may perform poorly in the presence of $r_i(t)$ as shown in Section~\ref{sec:num}. Thus, we have designed a new regularization based online algorithm as shown in Algorithm~\ref{alg:reg}. Compared with Algorithm~\ref{alg:reg2}, the main difference is that Algorithm~\ref{alg:reg} distinguishes two cases when solving the convex optimization problem (\ref{eqa:regalgwithr}) based on the values of $K_c$ and $K_s$. When $1\leq K_s\leq K_c$, meaning that $r_i(t)$ is relatively small, Algorithm~\ref{alg:reg} runs the same convex optimization as in Algorithm~\ref{alg:reg2} since small $r_i(t)$ will not result in a big performance loss. When $K_s>K_c$ (e.g., when $r_i(t)$ is large), Algorithm~\ref{alg:reg} sets a different value of $\eta$ to utilize the large switching offset for a more aggressive switching policy. Note that, Algorithm~\ref{alg:reg} needs the bound of $c_i(t)$ and $r_i(t)$. As a result, the competitive ratio of Algorithm~\ref{alg:reg} also depends on these two parameters.

\begin{algorithm}[t]
\caption{Online Regularization with Linear Operational Cost and Switching Cost Offset}
\label{alg:reg}
\begin{algorithmic}[1]
\State Compute
\begin{align}
K_c&=\max\{\frac{2(1+\epsilon/D_{min})D_{max}\beta_i}{\min_{i,t}\{r_i(t)\}\min_{i,t}\{c_i(t)\}},1\} \label{K1}\\
K_s&=\frac{1}{1-\frac{\sum_i\beta_i\max_{i,t}\{r_i(t)\}}{\min_{i,t}\{c_i(t)\}D_{min}}} \label{K2}
\end{align}
\State For input $\epsilon>0$, set
\begin{displaymath}
\eta=\begin{cases}
                &\ln(1+N D_{max}/\epsilon)\ \ \text{if $1 \leq K_s\leq K_c$}\\
                &K_c\ln(1+N D_{max}/\epsilon)\ \ \text{o/w}
\end{cases}
\end{displaymath}
\State Initialize $\tilde{s}_i(0)=0$ for all $i=1,\cdots,N$
\For {$t\leftarrow\ 1\ to\ T$}
\State The ISO solves the following problem to obtain $\tilde{s}_i(t)$
\begin{equation}\label{eqa:regalgwithr}
\begin{aligned}
&\tilde{s}_i(t)=\text{argmin}_{s_i(t)\in P_t}\bigg\{c_i(t)s_i(t)\\
              &+\frac{1}{\eta}\sum_{i=1}^N\beta_i\left[(\tilde{z}_i(t)+\epsilon/N)\ln\left(\frac{\tilde{z}_i(t)+\epsilon/N}{\tilde{s}_i(t-1)+\epsilon/N}\right)-\tilde{z}_i(t)\right]\bigg\}
\end{aligned}
\end{equation}
where
$P_t=\{s_i(t)|\sum_{i=1}^Ts_i(t)\geq D(t),\ s_i(t)\geq 0\ \forall i\}$ and
\begin{displaymath}
\tilde{z}_i(t)=\begin{cases}&s_i(t)\ \ \text{if $1 \leq K_s\leq K_c$}\\
                            &\max\{s_i(t)-r_i(t),\tilde{s}_i(t-1)\}\ \ \text{o/w}
                            \end{cases}
\end{displaymath}
\EndFor
\end{algorithmic}
\end{algorithm}

\begin{theorem}\label{thm:onlinereg}
The optimal solution $\tilde{s}_i(t)$ of Algorithm~\ref{alg:reg} can achieve a competitive ratio of $\Lambda(1+2\ln(1+N\frac{D_{max}}{D_{min}}))$ compared to the offline optimal solution of (\ref{eqa:isomin2}) where
\begin{equation}\label{eqa:thmlambda}
\Lambda=\begin{cases}
K_s &\ \ \text{if $1\leq K_s \leq K_c$}\\
K_c &\ \ \text{otherwise},
\end{cases}
\end{equation}
by setting $\epsilon=D_{min}$.
\end{theorem}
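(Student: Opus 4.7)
The plan is to extend the primal-dual argument used for Theorem~\ref{thm:oldratio} so that it absorbs the switching-cost offset $r_i(t)$ and accommodates the two-case structure of Algorithm~\ref{alg:reg}. The first step is to rewrite (\ref{eqa:isomin2}) in the smooth form by introducing auxiliary variables $z_i(t)$ with $z_i(t)\geq s_i(t)-s_i(t-1)-r_i(t)$ and $z_i(t)\geq 0$, and to form the Lagrangian with multipliers $\lambda_t,l_{i,t},\mu_{i,t},k_{i,t}$. Relative to Theorem~\ref{thm:oldratio}, the Lagrangian now carries an extra term $-\sum_{t,i}\mu_{i,t}r_i(t)$ coming from the offset, and linearity of the operational cost turns stationarity in $s_i(t)$ into the exact equality $c_i(t)+\mu_{i,t}-\mu_{i,t+1}=\lambda_t+l_{i,t}$. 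The dual value evaluated at any feasible multipliers with $\mu_{i,t}\leq\beta_i$ then takes the form $\sum_t\lambda_t D(t)-\sum_{t,i}\mu_{i,t}r_i(t)$.

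I would then analyze the two regimes of Algorithm~\ref{alg:reg} separately. In Case~1 ($1\leq K_s\leq K_c$), where $\tilde z_i(t)=s_i(t)$ and $\eta=\ln(1+ND_{max}/\epsilon)$, the KKT conditions of (\ref{eqa:regalgwithr}) read $c_i(t)+\tfrac{\beta_i}{\eta}\ln\bigl((\tilde s_i(t)+\epsilon/N)/(\tilde s_i(t-1)+\epsilon/N)\bigr)=\tilde\lambda_t+\tilde l_{i,t}$, and I would recycle the dual assignment from Theorem~\ref{thm:oldratio}, choosing $\hat\lambda_t=\tilde\lambda_t$, $\hat l_{i,t}=\tilde l_{i,t}$, and $\hat\mu_{i,t}=\tfrac{\beta_i}{\eta}\ln\bigl((D_{max}+\epsilon/N)/(\tilde s_i(t-1)+\epsilon/N)\bigr)\in[0,\beta_i]$. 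The negative dual term $-\sum_{t,i}\hat\mu_{i,t}r_i(t)$ is then bounded by $\sum_i\beta_i\max_{i,t}r_i(t)\cdot T$, and the definition~(\ref{K2}) of $K_s$ is exactly what is needed to absorb this into a $(1-1/K_s)$-fraction of the offline operational cost, yielding the factor $K_s$. In Case~2 ($K_s>K_c$), the larger $\eta=K_c\ln(1+ND_{max}/\epsilon)$ and the shifted surrogate $\tilde z_i(t)=\max\{s_i(t)-r_i(t),\tilde s_i(t-1)\}$ are used; I would split the time slots into those where $\tilde z_i(t)=\tilde s_i(t-1)$ (so the online incurs no regularized switching cost) and those where $\tilde z_i(t)=s_i(t)-r_i(t)$ (so the KKT condition applies in the shifted variable and the log telescoping still works). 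The calibration~(\ref{K1}) of $K_c$ is exactly what is needed so that the raw switching cost $\beta_i(\tilde s_i(t)-\tilde s_i(t-1)-r_i(t))^+$ is dominated by a $K_c$-multiple of the regularizer of $\tilde z_i(t)$, producing the Case~2 factor $K_c$.

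The log factor $1+2\ln(1+ND_{max}/D_{min})$ then comes from the same log-sum telescoping used in~\cite{buchbinder2014competitive}: after fixing $\epsilon=D_{min}$, one uses $\sum_t\ln\bigl((\tilde s_i(t)+\epsilon/N)/(\tilde s_i(t-1)+\epsilon/N)\bigr)\leq\ln(1+ND_{max}/D_{min})$ per data center by collapsing the telescope at the boundary, while the factor~2 arises because both the online switching cost \emph{and} the dual discount $-\hat\mu_{i,t}r_i(t)$ each contribute one log term that must be combined. The main obstacle will be the careful bookkeeping of this offset: the dual reduction $-\sum_{t,i}\hat\mu_{i,t}r_i(t)$ must be absorbed into the offline operational cost at the same time as the raw online switching cost is bounded by the regularizer of the appropriate surrogate $\tilde z_i(t)$. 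The two-case rule is engineered so that exactly one of these absorptions is tight in each regime---Case~1 wins when $r_i(t)$ is small and the offset-induced dual reduction is mild, Case~2 wins when $r_i(t)$ is large and the algorithm can afford the more aggressive $\eta=K_c\ln(\cdot)$---so that taking the worse of the two regimes produces the stated competitive ratio $\Lambda(1+2\ln(1+ND_{max}/D_{min}))$.
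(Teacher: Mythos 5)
Your setup and your Case~1 match the paper's proof essentially verbatim: the same reformulation of (\ref{eqa:isomin2}) with auxiliary variables, the same Lagrangian carrying the extra $-\sum_{t,i}\mu_{i,t}r_i(t)$ term, and the same dual assignment $\hat\lambda_t=\tilde\lambda_t$, $\hat l_{i,t}=\tilde l_{i,t}$, $\hat\mu_{i,t}=\frac{\beta_i}{\eta}\ln\bigl(\frac{D_{max}+\epsilon/N}{\tilde s_i(t-1)+\epsilon/N}\bigr)\leq\beta_i$, with $K_s$ from (\ref{K2}) absorbing the resulting $-\sum_{t,i}\beta_i r_i(t)$ discount into the online operational cost (via $\sum_t\tilde\lambda_tD(t)\geq\sum_{t,i}c_i(t)\tilde s_i(t)$). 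That part is fine.

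Case~2 is where there is a genuine gap. You implicitly retain the Case-1 dual assignment and claim the calibration (\ref{K1}) of $K_c$ is what makes the raw switching cost $\beta_i(\tilde s_i(t)-\tilde s_i(t-1)-r_i(t))^+$ dominated by a $K_c$-multiple of the regularizer of $\tilde z_i(t)$. But that domination follows from $a\ln(a/b)\geq a-b$ alone and is used identically in both cases; it needs no $K_c$. The actual difficulty in Case~2 is that when $K_s<1$, i.e.\ $\sum_i\beta_i\max_{i,t}r_i(t)\geq\min_{i,t}c_i(t)\,D_{min}$, the Case-1 dual lower bound $\sum_t\tilde\lambda_tD(t)-\sum_{t,i}\beta_ir_i(t)$ can be zero or negative, so no finite multiplicative ratio can be extracted from that assignment (and the Case-1 KKT identity is stated in $\tilde s_i(t)$, whereas (\ref{eqa:regalgwithr}) now gives stationarity in the shifted variable $\tilde z_i(t)$). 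The paper's resolution, which your proposal is missing, is to \emph{change} the dual assignment in Case~2: set $\mu_{i,t}=0$, which kills the $r_i(t)$ discount entirely, and compensate by lowering $\lambda_t=\tilde\lambda_t-\max_i\frac{\beta_i}{\eta}\ln\bigl(\frac{\tilde z_i(t)+\epsilon/N}{\tilde s_i(t-1)+\epsilon/N}\bigr)$ so that the coefficient of $s_i(t)$ in the Lagrangian stays nonnegative. The dual value then loses $\sum_t\max_i\frac{\beta_i}{\eta}\ln\bigl(\frac{\tilde z_i(t)+\epsilon/N}{\tilde s_i(t-1)+\epsilon/N}\bigr)D(t)$, and the role of $K_c$ (through both (\ref{K1}) and the inflated $\eta=K_c\ln(1+ND_{max}/\epsilon)$) is precisely to show that this loss is covered by the negative term $P=\eta\sum_t\sum_{i\in\mathbb{L}_t}r_i(t)\bigl[c_i(t)+\frac{\beta_i}{\eta}\ln\bigl(\frac{\tilde s_i(t)-r_i(t)+\epsilon/N}{\tilde s_i(t-1)+\epsilon/N}\bigr)\bigr]$ that the offset produces in the moving-cost bound. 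Without this re-assignment your Case~2 does not close. A minor additional inaccuracy: the factor $2$ in $1+2\ln(1+ND_{max}/D_{min})$ comes from $\eta(1+\epsilon/D_{min})$ evaluated at $\epsilon=D_{min}$ (the slack $\sum_i(\tilde s_i(t)+\epsilon/N)\leq(1+\epsilon/D_{min})D(t)$ in the moving-cost bound), not from combining two separate log contributions.
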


The main challenge of the proof is that the offline dual function of (\ref{eqa:isomin2}) has an extra negative term that is related to $r_i(t)$, leading to the coupling of workload dispatch decisions across multiple time slots and data centers. Therefore, we prove the competitive ratio in two cases based on the value of $K_s$ and $K_c$ as in Algorithm~\ref{alg:reg}. We assign two different sets of dual variables in different cases. In case 1 where $1\leq K_s \leq K_c$, the dual variables of (\ref{eqa:regalgwithr}) are assigned to the same values as in the proof of Theorem~\ref{thm:oldratio}. 
In the other case where $K_s<1$ or $K_s>K_c$, we assign different values for the dual variables. 
Please refer to \cite{techreport} for a detailed proof.

\begin{figure}[h]
    \centering
	\includegraphics[width=0.9\linewidth]{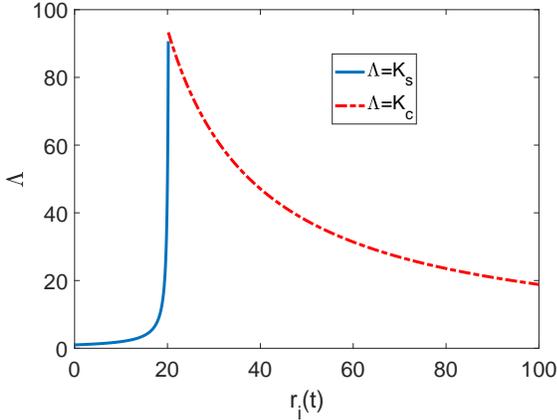}
	\caption{Competitive Ratio with Switching Offset}
	\label{fig:competitiveratio}
\end{figure}

The competitive ratio in Theorem~\ref{thm:onlinereg} depends on the values of $\beta_i$, $r_i(t)$ and $c_i(t)$, which we believe is necessary for all online algorithms. We plot $\Lambda$ versus $r_i(t)$ in Figure~\ref{fig:competitiveratio} with the same setting as in Section~\ref{sec:num}, where we have five data centers and use real data for unit electricity prices $c_i(t)$, as well as actual workload trace from the Google Cloud Platform as $D(t)$ and $\beta=6$. $r_i(t)\ \forall i,t$ in Figure~\ref{fig:competitiveratio} are all equal. The blue solid line refers to the case when $\Lambda=K_s$ and the red dash line refers to the case when $\Lambda=K_c$. Figure~\ref{fig:competitiveratio} shows that $\Lambda$ increases first and goes down after a certain value as $r_i(t)$ increases.  When $r_i(t)=0$, we have $\Lambda=1$ and the competitive ratio is $1+2\ln(1+N\frac{D_{max}}{D_{min}})$, similar to the bound in \cite{buchbinder2014competitive}. As $r_i(t)$ increases, $\Lambda$ becomes significantly large due to (\ref{K2}). When $r_i(t)$ keeps increasing, we eventually have $K_s<0$ and $K_c\approx 1$ from (\ref{K1}). Thus, $\Lambda\approx 1$ and the competitive ratio becomes close to the upper bound $1+2\ln(1+N\frac{D_{max}}{D_{min}})$ again. Thus, $\Lambda$ is determined by $K_c$ when $r_i(t)$ is large and is determined by $K_s$ when $r_i(t)$ is small as illustrated by Fig~\ref{fig:competitiveratio}. Note that if we directly apply Algorithm~\ref{alg:reg2} to (\ref{eqa:isomin2}), the performance can be very bad for large $r_i(t)$ as shown in Section~\ref{sec:num}.

\section{Numerical Results}\label{sec:num}
In this section, we evaluate the performance of our algorithms in various circumstances using real-data based simulations.

\subsection{Simulation Setup}\label{sec:numsetup}
Our simulation is based on real-world data traces for data center locations, workload, energy prices, and renewable energy supply as discussed below.

1) The workload: We use the workload trace in May 2011  from a Google Cluster of about 12.5k machines \cite{clusterdata:Wilkes2011} shown in Figure~\ref{fig:workload}. We count the average number of jobs arrived at the cluster every five minutes over two days.

2) The availability of renewable energy: We use traces with 5 minutes granularity from \cite{solardatasource,winddatasource} for solar and wind energy in five states where Google data centers are located. Figure~\ref{fig:solardata} shows the normalized Global Horizontal Irradiance (GHI) from five solar plants and Figure~\ref{fig:winddata} shows the normalized energy generation from five wind farms in the corresponding states.

The renewable energy supply determines the switching cost offset for each data center and is normalized with respect to the average workload. Let $\rho$ denote the ratio between average renewable energy supply and average workload over two days.

\begin{figure*}[ht]
  \centering
  \subfigure[Two-day average workload from Google]{
    \label{fig:workload} 
    \includegraphics[width=0.3\textwidth]{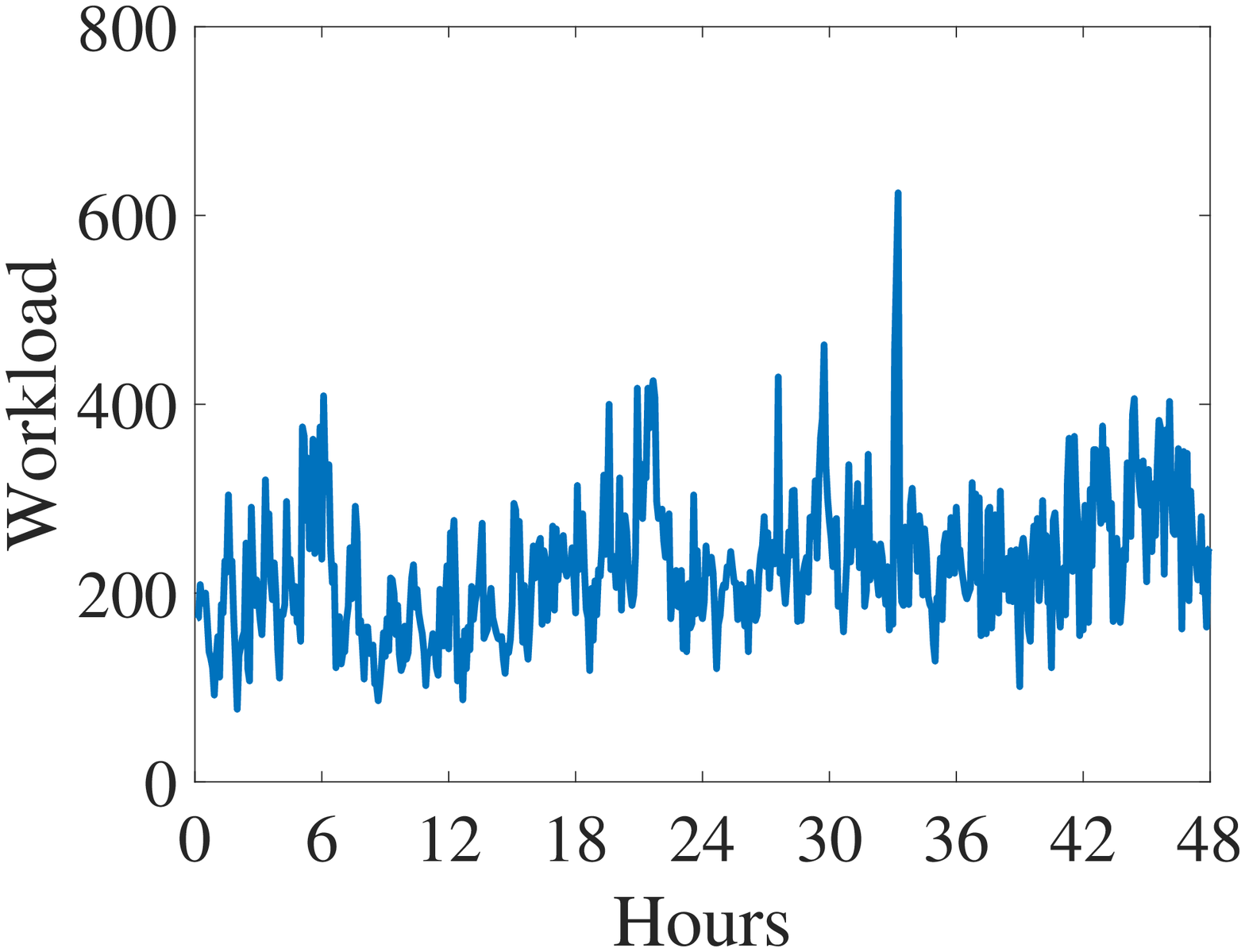}}
  \subfigure[Solar Data]{
    \label{fig:solardata} 
    \includegraphics[width=0.3\textwidth]{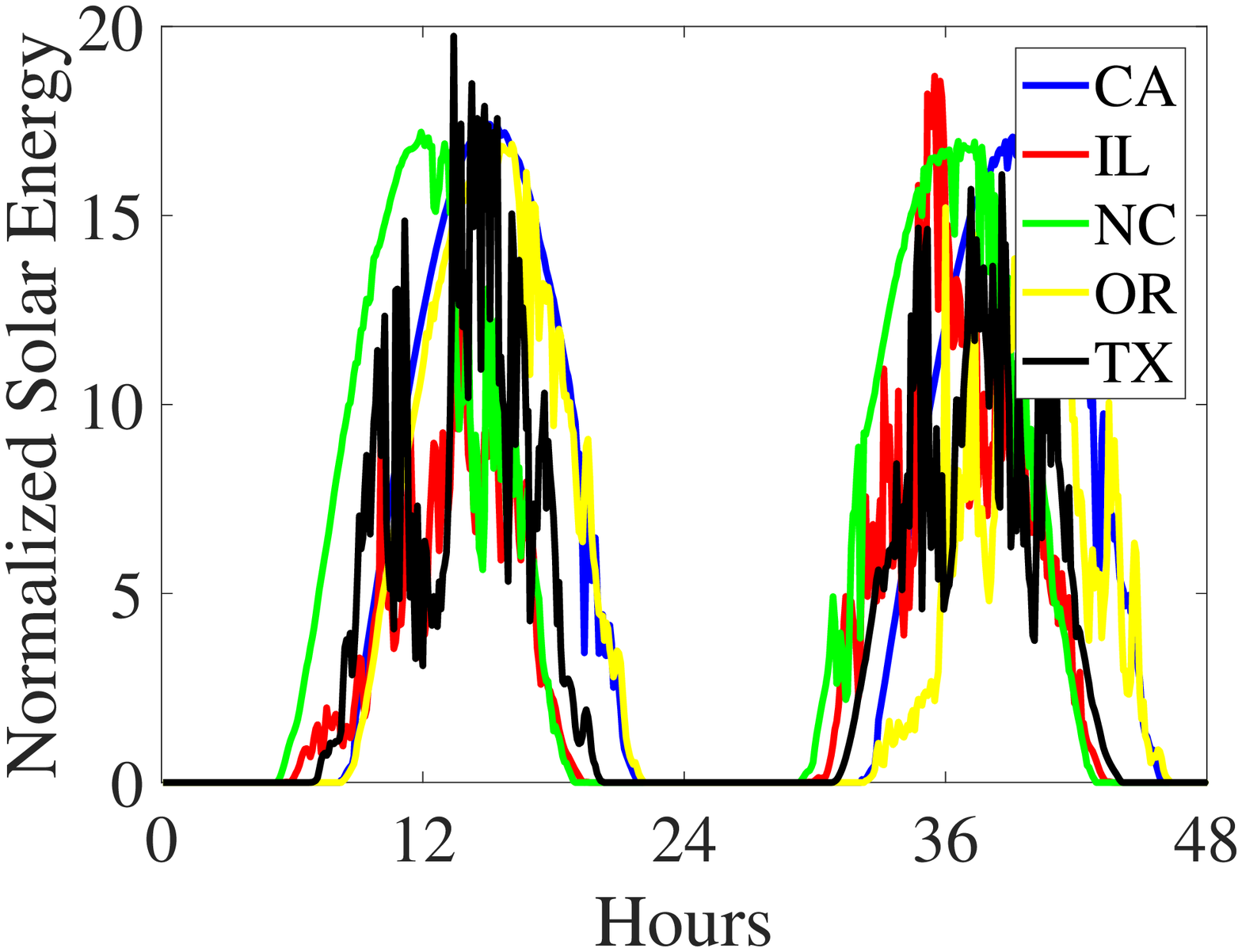}}
  \subfigure[Wind Data]{
    \label{fig:winddata} 
    \includegraphics[width=0.3\textwidth]{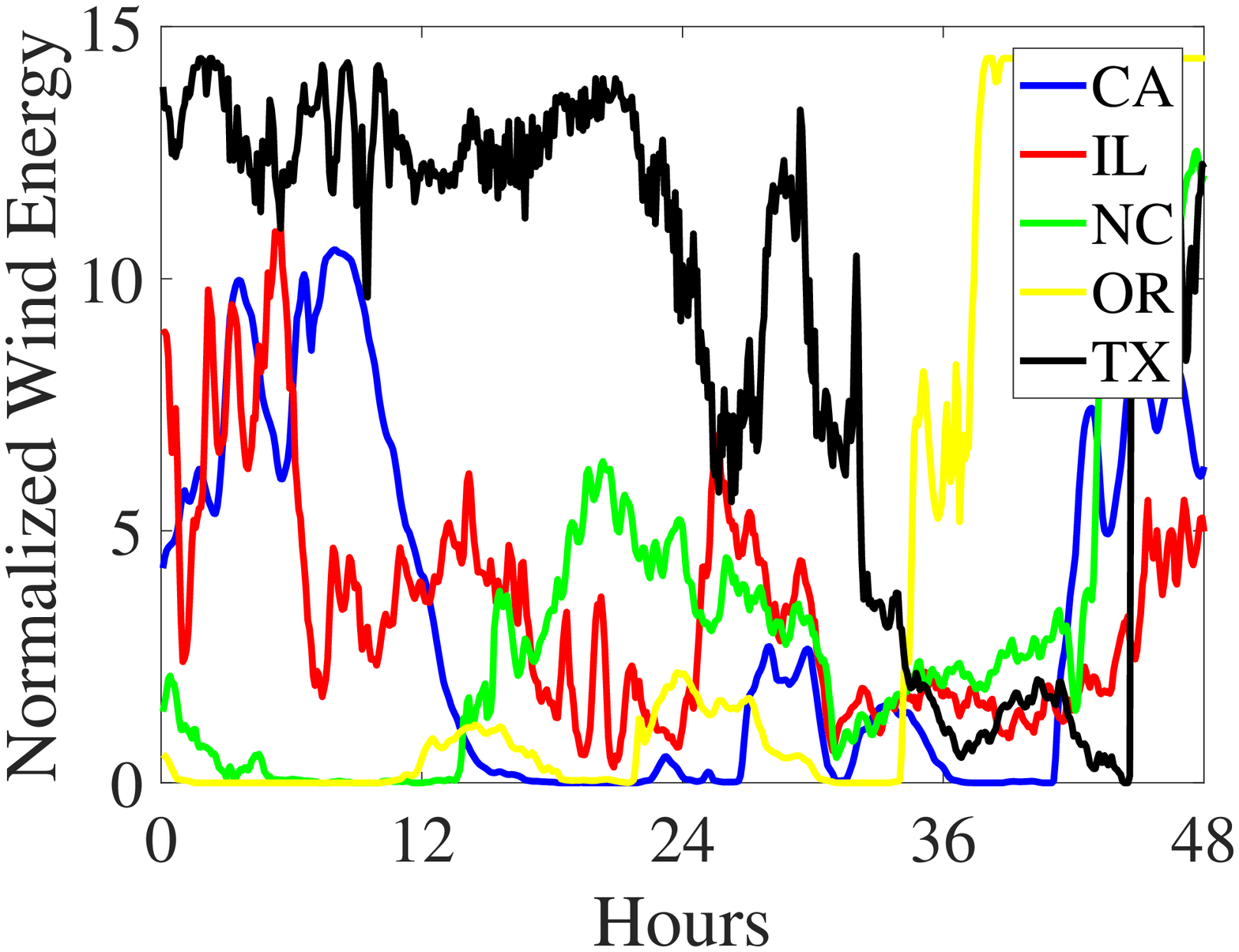}}
  \caption{Workload and the Renewable Energy Supply}
  \label{fig:renewableenergydata} 
\end{figure*}

3) The data center system: We consider a system with five data centers located in CA, IL, OR, TX and NC where Google has data centers. Each data center has access to the closet solar plant and wind turbine farm. We consider two operating cost functions in the simulation. In Figure~\ref{fig:reg_vs_greedy}, the operating cost equals to the energy price plus an extra penalty term as the following:
\begin{equation}\label{eqa:simucost}
f_{i,t}(x)=(p_i+M_{i,t})x
\end{equation}
where $p_i$ is the industrial electricity price in each of the five states in May 2011 \cite{electricityprice} and $M_{i,t}$ is a cyclic penalty term
\begin{displaymath}
M_{i,t}=\begin{cases}
              10p_i\ \ \text{if $t\mod N\geq i$}\\
              0\ \ \text{otherwise}
              \end{cases}
\end{displaymath}

In Figures~\ref{fig:change_with_beta_delay} and \ref{fig:change_with_offset_delay}, we consider another operational cost function consisting of the energy cost and delay cost. The energy cost is defined as follows:
\begin{equation}\label{eqa:simucost2}
p_i(x-\varepsilon_{i,t})^+
\end{equation}
where $p_i$ is the same price as in (\ref{eqa:simucost}) and $\varepsilon_{i,t}$ is a fixed normalized portion of the renewable energy with $\rho_o=0.2$. For the delay cost, we use a similar model as in \cite{lin2012online}:
\begin{displaymath}
D_{i,t}=\delta_i+\frac{1000ms}{\mu_i-x}
\end{displaymath}
where $\delta_i$ is the transmission delay between each data center and the central workload balancer (CA) resulting in delays between $10ms$ and $260ms$. $\mu_i=0.1(ms)^{-1}$ refers to the average number of jobs processed per unit time.

We use the renewable energy in each state as the switching offset $r_i(t)$ with normalized portion $\rho_s=0.3$ in Fig~\ref{fig:change_with_beta_delay} and vary its value in Fig~\ref{fig:change_with_offset_delay}. For switching cost, we set $\beta=20$ in Figure~\ref{fig:change_with_offset_delay} and vary $\beta$ in Figures~\ref{fig:reg_vs_greedy} and \ref{fig:change_with_beta_delay} to show its impact on total cost in different algorithms.

\subsection{Simulation Results}
We perform several simulations to evaluate the impact of the switching cost and the switching cost offset in various circumstances. In Figure~\ref{fig:reg_vs_greedy}, we set the switching cost offset $r_i(t)=0$ and compare the greedy algorithm and the regularization algorithm. In Figures~\ref{fig:change_with_beta_delay} and \ref{fig:change_with_offset_delay}, we set $r_i(t)$ to be the renewable energy supply at $t$ and investigate its effect.

\begin{figure}[h]
	\includegraphics[width=\linewidth]{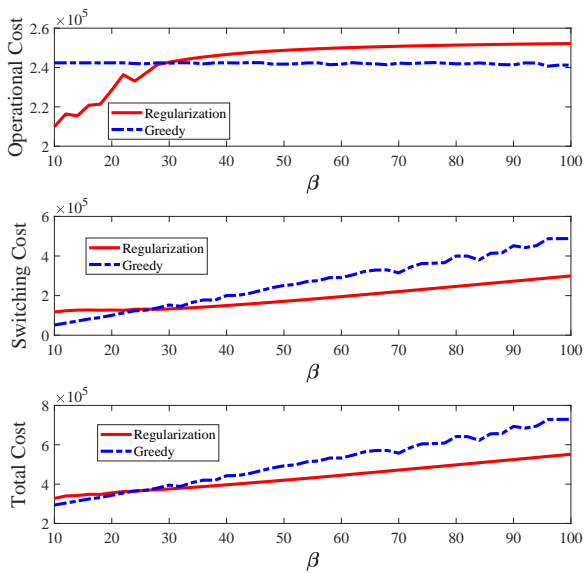}
	\caption{Regularization Algorithm vs. Greedy Algorithm}
	\label{fig:reg_vs_greedy}
\end{figure}

We first compare the performance of the greedy algorithm and the regularization based algorithm when there is no switching cost offset. In Figure~\ref{fig:reg_vs_greedy}, we vary the value of the switching cost $\beta$ while fixing all the other parameters. The top two subfigures show the operational and switching cost of both regularization algorithm and greedy algorithm respectively. The bottom one compares the overall performance of the two algorithms. Based on our analysis in Section~\ref{sec:onlineconvex}, the greedy algorithm has a larger competitive ratio. Moreover, the real performance of the regularization algorithm is also much better than the greedy algorithm when $\beta$ increases as shown in Figure~\ref{fig:reg_vs_greedy}. In addition, as $\beta$ increases, the regularization algorithm reduces the amount of workload switching (difference of workload assigned to a data center between two consecutive time slots) to each data center more dramatically compared to the greedy algorithm. The workload switching for regularization algorithm is more aggressive when $\beta$ is small and more conservative when $\beta$ is large. We can see that the switching cost of the regularization algorithm increases much slower than the greedy algorithm, leading to a higher operational cost than the latter.

\begin{figure}[h]
	\includegraphics[width=\linewidth]{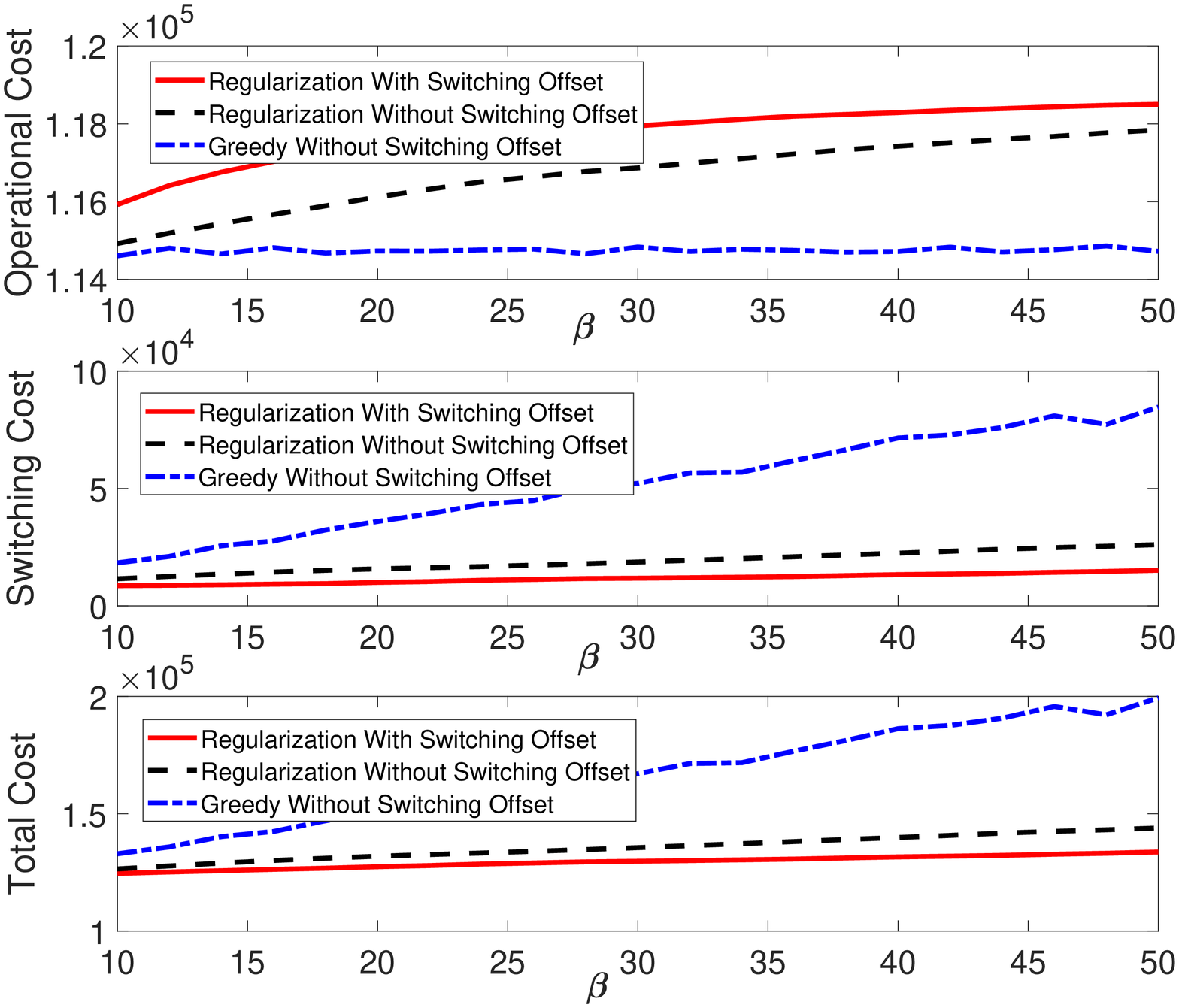}
	\caption{Cost vs. $\beta$ with Real Data}
	\label{fig:change_with_beta_delay}
\end{figure}

\begin{figure}[h]
	\includegraphics[width=\linewidth]{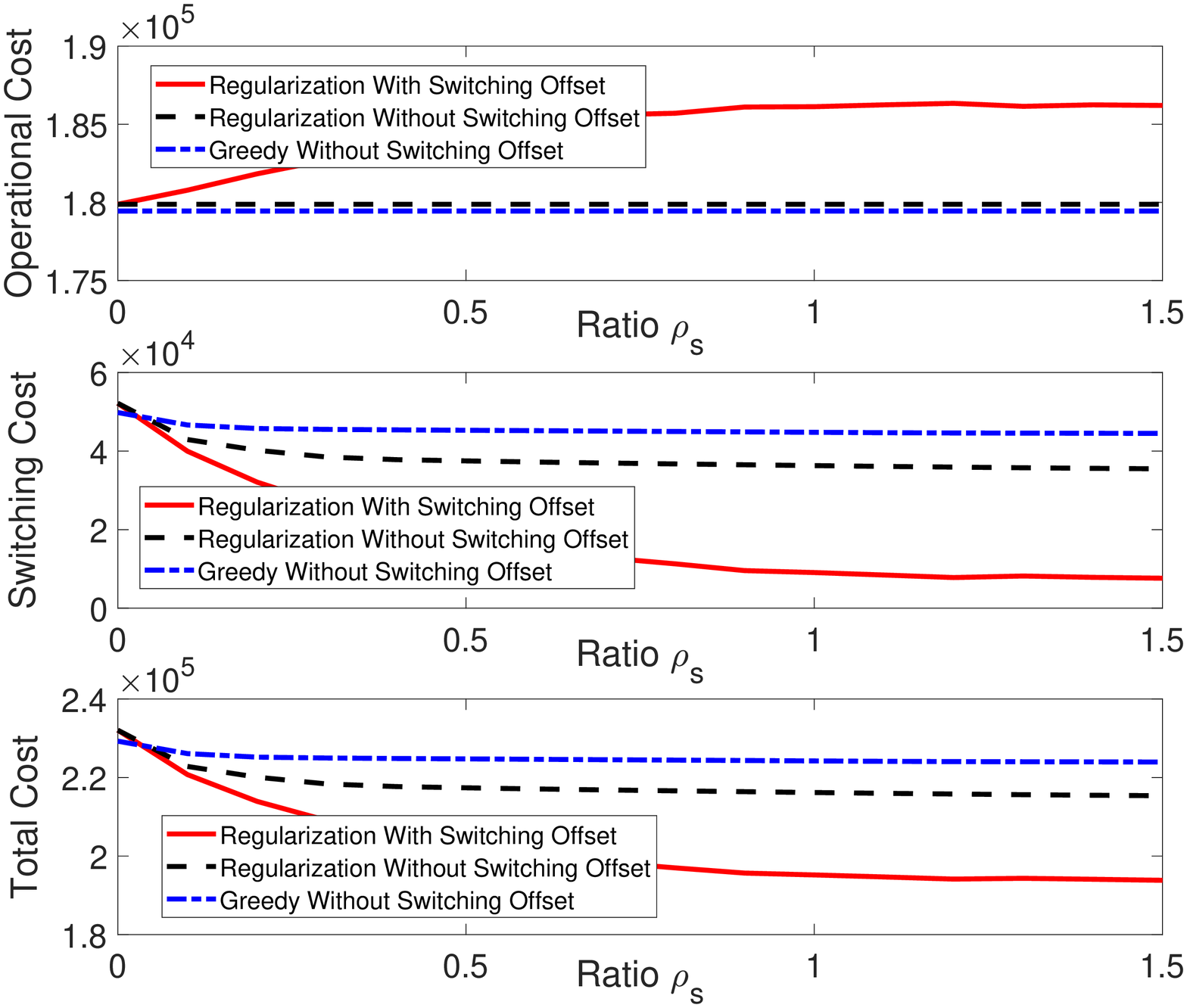}
	\caption{Cost vs. Switching Offset with Real Data}
	\label{fig:change_with_offset_delay}
\end{figure}

In Figure~\ref{fig:change_with_beta_delay} and Figure~\ref{fig:change_with_offset_delay}, we consider a fixed normalizing portion ($\rho_o=0.2$) of the renewable energy is allocated to the operational cost and the switching cost offset $r_i(t)$ is the renewable energy with normalizing parameter $\rho_s$. In both figures, we plot the operational cost, switching cost and total cost respectively. We compare the performance of Algorithm~\ref{alg:reg}, Algorithm~\ref{alg:reg2} and the greedy algorithm to investigate the impact of $\beta$ and $r_i(t)$. In Figure~\ref{fig:change_with_beta_delay}, we vary the value of $\beta$. The top two sub-figures show that the two regularization algorithms (with and without considering $r_i(t)$) have higher operational cost and smaller switching cost, which is consistent with the observation in Figure~\ref{fig:reg_vs_greedy}. The bottom sub-figure shows that Algorithm~\ref{alg:reg} outperforms the other two as expected. In Figure~\ref{fig:change_with_offset_delay}, we study the impact of $r_i(t)$ by varying the value of $\rho$, which is linearly proportional to $r_i(t)$. We observe that the total cost of Algorithm~\ref{alg:reg} is much smaller than the other two algorithms and the gap increases as $r_i(t)$ becomes larger. This is expected since Algorithm~\ref{alg:reg} utilizes $r_i(t)$ to adjust the workload dispatch more aggressively to reduce the total cost. For Algorithm~\ref{alg:reg2} and the greedy algorithm, the decrease in the total cost only comes from the increase of $r_i(t)$.

\section{Conclusion}\label{sec:conclusion}
In this paper, we study the right-sizing problem in a system consisting of a central workload balancer and multiple heterogeneous data centers with different operational cost and switching cost. We further introduce a switching cost offset to our model. Two online regularization algorithms are proposed for the case with and without the switching cost offset. For the case without switching cost offset, we show that our algorithm performs better than the greedy algorithm in terms of both the competitive ratio obtained and the real performance in real data based simulations. When considering the switching cost offset, our algorithms achieves a competitive ratio proportional to the logarithm of the number of data centers
\bibliographystyle{abbrv}
\bibliography{ref}
\section{Appendix}
\textbf{Proof of Theorem}~\ref{thm:oldratio}
\begin{proof}
We show that $C\in [0, \beta]$. We have the following
\begin{align*}
&\sum_{t=1}^T\sum_{i=1}^N \frac{\beta_i}{\eta}\ln\left(\frac{\tilde{s}_i(t)+\epsilon/N}{\tilde{s}_i(t-1)+\epsilon/N}\right)\tilde{s}_i(t)\\
=&\sum_{i=1}^N \frac{\beta_i}{\eta}\left[\sum_{t=1}^T\ln\left(\frac{\tilde{s}_i(t)+\epsilon/N}{\tilde{s}_i(t-1)+\epsilon/N}\right)(\tilde{s}_i(t)+\frac{\epsilon}{N})\right]\\
 &-\sum_{i=1}^N\frac{\beta_i}{\eta}\left[\frac{\epsilon}{N}\sum_{t=1}^T\ln\left(\frac{\tilde{s}_i(t)+\epsilon/N}{\tilde{s}_i(t-1)+\epsilon/N}\right)\right]\\
\mygeqa& \sum_{i=1}^N \frac{\beta_i}{\eta}\left[\left(\sum_{t=1}^T \tilde{s}_i(t)+\frac{\epsilon}{N}\right)\ln\left(\frac{\sum_{t=1}^T \tilde{s}_i(t)+\frac{\epsilon}{N}}{\sum_{t=1}^T \tilde{s}_i(t-1)+\frac{\epsilon}{N}}\right)\right]\\
 &-\sum_{i=1}^N\frac{\beta_i}{\eta}\left[\frac{\epsilon}{N}\ln\left(\frac{\tilde{s}_i(T)+\frac{\epsilon}{N}}{\tilde{s}_i(0)+\frac{\epsilon}{N}}\right)\right]\\
\mygeqb&\sum_{i=1}^N \frac{\beta_i}{\eta}\left[\sum_{t=1}^T(\tilde{s}_i(t)+\frac{\epsilon}{N})-\sum_{t=1}^T(\tilde{s}_i(t-1)+\frac{\epsilon}{N})\right]\\
 &+\sum_{i=1}^N \frac{\beta_i}{\eta}\left[(\tilde{s}_i(0)+\frac{\epsilon}{N})\ln\left(\frac{\tilde{s}_i(0)+\frac{\epsilon}{N}}{\tilde{s}_i(T)+\frac{\epsilon}{N}}\right)\right]\\
\geq & \sum_{i=1}^N\frac{\beta_i}{\eta}\left[\tilde{s}_i(T)-\tilde{s}_i(0)+\tilde{s}_i(0)-\tilde{s}_i(T)\right]=0
\end{align*}
Inequality (a) follows from the fact that
\begin{displaymath}
\sum_i a_i\log(a_i/b_i)\geq (\sum_ia_i)\log(\frac{\sum_i a_i}{\sum_i b_i})\ \ \forall a_i,b_i>0
\end{displaymath}
and inequality (b) follows from the fact that
\begin{equation}\label{eqa:lnfact}
a\ln(a/b)\geq a-b\ \ \forall a,b>0
\end{equation}
and $\tilde{s}_i(0)=0$. Thus, we have $C\geq 0$. By noting that $D_{max}\geq \tilde{s}_i(t)\geq 0$ and $\eta=\ln(1+ND_{max}/\epsilon)$, we further have
\begin{displaymath}
\begin{aligned}
C &= \frac{\sum_{t=1}^T\sum_{i=1}^N \frac{\beta_i}{\eta}\ln\left(\frac{\tilde{s}_i(t)+\epsilon/N}{\tilde{s}_i(t-1)+\epsilon/N}\right)\tilde{s}_i(t)}{\sum_{t=1}^T D(t)}\\
  &\leq \frac{\sum_{t=1}^T\sum_{i=1}^N \frac{\beta_i}{\eta}\ln\left(\frac{D_{max}+\epsilon/N}{\epsilon/N}\right)\tilde{s}_i(t)}{\sum_{t=1}^T D(t)}\\
  &= \frac{\sum_{t=1}^T\sum_{i=1}^N \beta_i\tilde{s}_i(t)}{\sum_{t=1}^T D(t)}\\
  &\leq \beta \frac{\sum_{t=1}^T\sum_{i=1}^N \tilde{s}_i(t)}{\sum_{t=1}^T D(t)}\\
  &=\beta.
\end{aligned}
\end{displaymath}
\end{proof}

\textbf{Proof of Theorem}~\ref{thm:onlinereg}
\begin{proof}
Similar to~\eqref{eqa:isomineqv}, we first convert the offline problem (\ref{eqa:isomin2}) to an equivalent problem as follows:
\begin{equation}\label{eqa:olregpf1}
\begin{aligned}
\min_{s_i(t)} & \sum_{t=1}^{T} \sum_{i=1}^N [ c_i(t)s_i(t)+\beta_ix_i(t)]\\
s.t.\ & \sum_{i=1}^N s_i(t)\geq D(t)\ \forall t=1,2,\cdots,T\\
& x_i(t)\geq s_i(t)-s_i(t-1)-r_i(t)\ \forall i,t\\
& x_i(t)\geq 0\ \forall i,t\\
& s_i(t)\geq 0\ \forall i,t\\
\end{aligned}
\end{equation}

Introducing dual variables $\lambda_t$, $\mu_{i,t}$, $k_{i,k}$ and $l_{i,t}$ to the four constraints in (\ref{eqa:olregpf1}), respectively, the Lagrangian function of (\ref{eqa:olregpf1}) is
\begin{align*}
L&=\sum_t\sum_i\left(c_i(t)+\mu_{i,t}-\mu_{i,t+1}-l_{i,t}-\lambda_t\right)s_i(t) \\
&+\sum_t\sum_i\left[(\beta_i-\mu_{i,t}-k_{i,t})x_i(t) -\mu_{i,t}r_i(t)\right] \\
&+\sum_t \lambda_tD(t)
\end{align*}

\noindent and the dual function of (\ref{eqa:olregpf1}) is
\begin{align}
D(&\mu_{i,t},\lambda_t,l_{i,t},k_{i,t})= \sum_t \lambda_tD(t)-\sum_t\sum_i\mu_{i,t}r_i(t) \nonumber \\
&+\min_{s_i(t)}\sum_t\sum_i\left(c_i(t)+\mu_{i,t}-\mu_{i,t+1}-l_{i,t}-\lambda_t\right)s_i(t) \nonumber \\
&+\min_{x_i(t)}\sum_t\sum_i(\beta_i-\mu_{i,t}-k_{i,t})x_i(t) \label{eqa:olregpf2}
\end{align}

In Algorithm~\ref{alg:reg}, we distinguish two cases based on the values of $K_s$ and $K_c$. Thus, our proof also consists of two parts. 

\noindent{\bf Case 1: $1\leq K_s\leq K_c$}. In this case, we have $\eta = \ln(1+N\frac{D_{max}}{\epsilon})$ and $\tilde{z}_i(t) = s_i(t)$ in (\ref{eqa:regalgwithr}). Let  $\tilde{\lambda}_t$ and $\tilde{l}_{i,t}$ denote the dual variables associated with the constraints $\sum s_i(t)\geq D(t)\ \forall t$ and $s_i(t)\geq 0\ \forall i,t$ in (\ref{eqa:regalgwithr}), respectively. Applying the KKT conditions to the dual problem of (\ref{eqa:regalgwithr}), we have:
\begin{equation}\label{eqa:olregpf3}
\begin{aligned}
\frac{\beta_i}{\eta}\ln\left(\frac{\tilde{s}_i(t)+\epsilon/N}{\tilde{s}_i(t-1)+\epsilon/N}\right) &= \tilde{\lambda}_t-c_i(t)+\tilde{l}_{i,t}\\
\tilde{s}_i(t)\tilde{l}_{i,t}&=0
\end{aligned}
\end{equation}

We set dual variables in the offline problem~\eqref{eqa:olregpf1} as follows: $\lambda_t=\tilde{\lambda}_t$, $\mu_{i,t}=\frac{\beta_i}{\eta}\ln\left(\frac{D_{max}+\epsilon/N}{\tilde{s}_i(t-1)+\epsilon/N}\right)$, $l_{i,t}=\tilde{l}_{i,t}$ and $k_{i,t}=0$. It follows that $\mu_{i,t}\leq \beta_i$ from the definition of $\eta$, and $\mu_{i,t}-\mu_{i,t+1} = \frac{\beta_i}{\eta}\ln\left(\frac{\tilde{s}_i(t)+\epsilon/N}{\tilde{s}_i(t-1)+\epsilon/N}\right)$. Thus, $c_i(t)+\mu_{i,t}-\mu_{i,t+1}-l_{i,t}-\lambda_t=0$ from \eqref{eqa:olregpf3}.
We then have the following lower bound for the dual function (by setting $x_i(t) = 0$ in~\eqref{eqa:olregpf2}):
\begin{align}
D(&\mu_{i,t},\lambda_t,l_{i,t},k_{i,t}) \geq \sum_t \tilde{\lambda}_tD(t)-\sum_t\sum_i\beta_ir_i(t) \label{eqa:olregpf4}
\end{align}

Denote $\mathbb{L}_t=\{i|\tilde{s}_i(t)>\tilde{s}_i(t-1)+r_i(t)\}$. We first consider the moving cost $M_t$ incurred by the online algorithm:
\begin{align*}
& M_t=\eta \sum_{i\in \mathbb{L}_t}\frac{\beta_i}{\eta}(\tilde{s}_i(t)-\tilde{s}_i(t-1)-r_i(t))\\
   &\myleqa \eta\sum_{i\in \mathbb{L}_t} (\tilde{s}_i(t)-r_i(t)+\frac{\epsilon}{N})\cdot\left(\frac{\beta_i}{\eta}\ln\left(\frac{\tilde{s}_i(t)-r_i(t)+\epsilon/N}{\tilde{s}_i(t-1)+\epsilon/N}\right)\right)\\
   &\myleqb \eta\sum_{i\in \mathbb{L}_t} (\tilde{s}_i(t)-r_i(t)+\frac{\epsilon}{N})\cdot\left(\tilde{\lambda}_t-c_i(t)\right)\\
   &\leq \eta\sum_i(\tilde{s}_i(t)+\frac{\epsilon}{N})\tilde{\lambda}_t\leq \eta(D(t)+\epsilon)\tilde{\lambda}_t\\
   &\leq \eta(1+\frac{\epsilon}{D_{min}})\tilde{\lambda}_tD(t)
\end{align*}
\noindent where (a) follows from (\ref{eqa:lnfact}) and (b) follows from (\ref{eqa:olregpf3}). We then consider the operating cost $S$ incurred by the online algorithm:
\begin{align}
S&=\sum_t\sum_ic_i(t)\tilde{s}_i(t) \nonumber \\
 &\myeqa \sum_t\sum_i\tilde{\lambda}_t\tilde{s}_i(t)-\sum_{t=1}^T\sum_{i=1}^N \frac{\beta_i}{\eta}\ln\left(\frac{\tilde{s}_i(t)+\epsilon/N}{\tilde{s}_i(t-1)+\epsilon/N}\right)\tilde{s}_i(t) \nonumber \\
 &\myleqb \sum_t\tilde{\lambda}_tD(t) \label{eqa:olregpf5}
\end{align}
\noindent where (a) follows from (\ref{eqa:olregpf3}) and (b) follows from the fact that $C\geq 0$ as proved in Theorem~\ref{thm:oldratio}. Thus, we have
\begin{displaymath}
\begin{aligned}
&\text{Total Online Cost}=S+\sum_{t=1}^T M_t\\
&\leq (1+\eta(1+\epsilon/D_{min}))\sum_t\tilde{\lambda}_t D(t)\\
&= (1+\eta(1+\epsilon/D_{min}))\frac{\sum_t\tilde{\lambda}_t D(t)-\sum_t\sum_i\beta_ir_i(t)}{\sum_t\tilde{\lambda}_t D(t)-\sum_t\sum_i\beta_ir_i(t)}\sum_t\tilde{\lambda}_t D(t)
\end{aligned}
\end{displaymath}

From~\eqref{eqa:olregpf4}, we have

\begin{align*}
\sum_t \tilde{\lambda}_tD(t)-\sum_t\sum_i\beta_ir_i(t) \leq D(\mu_{i,t},\lambda_t,l_{i,t},k_{i,t})
\end{align*}

Moreover,
\begin{align*}
&\sum_t\tilde{\lambda}_tD(t) - \sum_t\sum_i\beta_ir_i(t) \\
\mygeqa& \sum_t\sum_ic_i(t)\tilde{s}_i(t)-\sum_t\sum_i\beta_ir_i(t) \\
\mygeqb& 0
\end{align*}

\noindent where (a) follows from \eqref{eqa:olregpf5} and (b) follows from the assumption that $K_s \geq 1 > 0$. It follows that
\begin{displaymath}
\begin{aligned}
&\text{Total Online Cost}\\
&\leq (1+\eta(1+\epsilon/D_{min}))\frac{D(\mu_{i,t},l_{i,t},k_{i,t},\lambda_t)}{1-\sum_t\sum_i\beta_ir_i(t)/\sum_t\tilde{\lambda}_t D(t)}\\
&\myleqa(1+\eta(1+\epsilon/D_{min}))\frac{D(\mu_{i,t},l_{i,t},k_{i,t},\lambda_t)}{1-\sum_t\sum_i\beta_ir_i(t)/\sum_t\sum_ic_i(t)\tilde{s}_i(t)}\\
&\leq(1+\eta(1+\epsilon/D_{min}))D(\mu_{i,t},l_{i,t},k_{i,t},\lambda_t)K_s
\end{aligned}
\end{displaymath}
\noindent where (a) follows from~\eqref{eqa:olregpf5}.

\noindent{\bf Case 2: $K_s > K_c$ or $K_s<1$}. In this case, we have $\eta = K_c\ln(1+N\frac{D_{max}}{\epsilon})$. By assigning the same dual variables as in Case 1 and applying the KKT conditions to (\ref{eqa:regalgwithr}), we have
\begin{equation}\label{eqa:olregpf6}
\frac{\beta_i}{\eta}\ln\left(\frac{\tilde{z}_i(t)+\epsilon/N}{\tilde{s}_i(t-1)+\epsilon/N}\right) = \tilde{\lambda}_t-c_i(t)+\tilde{l}_{i,t}
\end{equation}


We set $\lambda_t=\tilde{\lambda}_t-\max_i\frac{\beta_i}{\eta}\ln\left(\frac{\tilde{z}_i(t)+\epsilon/N}{\tilde{s}_i(t-1)+\epsilon/N}\right)$, $\mu_{i,t}=0$, $k_{i,t}=0$ and $l_{i,t}=\tilde{l}_{i,t}$. Due to the fact that $\frac{\beta_i}{\eta}\ln\left(\frac{\tilde{z}_i(t)+\epsilon/N}{\tilde{s}_i(t-1)+\epsilon/N}\right)=0\ \forall i\notin \mathbb{L}_t$, $\tilde{l}_{i,t}=0\ \forall i\in \mathbb{L}_t$ and (\ref{eqa:olregpf3}), we must have $\lambda_t\geq 0$ and $c_i(t)+\mu_{i,t}-\mu_{i,t+1}-l_{i,t}-\lambda_t\geq 0$. Therefore, the dual function now becomes following
\begin{displaymath}
\begin{aligned}
&D(\mu_{i,t},l_{i,t},k_{i,t},\lambda_t)=\sum_t \lambda_tD(t)-\sum_t\sum_i\mu_{i,t}r_i(t)\\
&+\min_{s_i(t)}\sum_t\sum_i\left(c_i(t)+\mu_{i,t}-\mu_{i,t+1}-l_{i,t}-\lambda_t\right)s_i(t)\\
&+\min_{x_i(t)}\sum_t\sum_i(\beta_i-\mu_{i,t}-k_{i,t})x_i(t)\\
&= \sum_t \tilde{\lambda}_tD(t)-\sum_t\max_i\frac{\beta_i}{\eta}\ln\left(\frac{\tilde{z}_i(t)+\epsilon/N}{\tilde{s}_i(t-1)+\epsilon/N}\right)D(t)
\end{aligned}
\end{displaymath}
Then, for moving cost $M_t$, we have
\begin{align*}
M_t &=\eta \sum_{i\in \mathbb{L}_t}\frac{\beta_i}{\eta}(\tilde{s}_i(t)-\tilde{s}_i(t-1)-r_i(t))\\
    &\myleqa \eta\sum_{i\in \mathbb{L}_t} (\tilde{s}_i(t)-r_i(t)+\frac{\epsilon}{N})\cdot\left(\tilde{\lambda}_t-c_i(t)\right)\\
    &\leq \eta\sum_i(\tilde{s}_i(t)+\frac{\epsilon}{N})\tilde{\lambda}_t-\sum_{i\in \mathbb{L}_t}\eta \tilde{\lambda}_t r_i(t)\\
    &\myleqb \eta(1+\frac{\epsilon}{D_{min}})\tilde{\lambda}_tD_t\\
    &-\sum_{i\in \mathbb{L}_t}\eta r_i(t)\left[\frac{\beta_i}{\eta}\ln\left(\frac{\tilde{s}_i(t)-r_i(t)+\epsilon/N}{\tilde{s}_i(t-1)+\epsilon/N}\right)+c_i(t)\right]
\end{align*}
where (a) follows from the same argument in Case 1 and (b) follows from (\ref{eqa:olregpf6}) and $\tilde{l}_{i,t}=0\ \forall i\in \mathbb{L}_t$. For the operating cost, we have
\begin{displaymath}
\begin{aligned}
S&=\sum_t\sum_ic_i(t)\tilde{s}_i(t)\\
 &\myeqa \sum_t\sum_i\tilde{\lambda}_t\tilde{s}_i(t)-\sum_{t=1}^T\sum_{i=1}^N \frac{\beta_i}{\eta}\ln\left(\frac{\tilde{z}_i(t)+\epsilon/N}{\tilde{s}_i(t-1)+\epsilon/N}\right)\tilde{s}_i(t)\\
 &\myleqb \sum_t\tilde{\lambda}_t D(t)
\end{aligned}
\end{displaymath}
where (a) follows (\ref{eqa:olregpf6}) and (b) follows the fact that $\tilde{z}_i(t)\geq \tilde{s}_i(t-1)$.  By denoting
\begin{displaymath}
P=\eta\sum_t\sum_{i\in \mathbb{L}_t}r_i(t)\left[c_i(t)+\frac{\beta_i}{\eta}\ln\left(\frac{\tilde{s}_i(t)-r_i(t)+\epsilon/N}{\tilde{s}_i(t-1)+\epsilon/N}\right)\right]
\end{displaymath}
we have
\begin{align*}
&\text{Total Online Cost}=S+\sum_{t=1}^T M_t\\
&\leq (1+\eta(1+\epsilon/D_{min}))\sum_t\tilde{\lambda}_tD(t)-P\\
&\leq (1+\eta(1+\epsilon/D_{min}))\left(\sum_t\tilde{\lambda}_tD(t)-\frac{P}{1+\eta(1+\epsilon/D_{min})}\right)\\
&\myleqa (1+\eta(1+\epsilon/D_{min}))\\
&\ \ \cdot \left[\sum_t\tilde{\lambda}_tD(t)-\sum_t\max_i\frac{\beta_i}{\eta}\ln\left(\frac{\tilde{z}_i(t)+\epsilon/N}{\tilde{s}_i(t-1)+\epsilon/N}\right)D(t)\right]\\
&\leq K_c\left(1+(1+\epsilon/D_{min})\ln\left(1+ND_{max}/\epsilon\right)\right)D(\mu_{i,t},l_{i,t},k_{i,t},\lambda_t)
\end{align*}
where (a) follows from
\begin{align*}
K_c &= \max\{\frac{2(1+\epsilon/D_{min})D_{max}\beta_i}{\min_{i,t}\{r_i(t)\}\min_{i,t}\{c_i(t)\}},1\}\\
&\geq \frac{2(1+\epsilon/D_{min})D_{max}\beta_i}{r_i(t)c_i(t)} \ \forall i,t \\
\Rightarrow &\frac{r_i(t)}{2(1+\epsilon/D_{min})}\left[\frac{c_i(t)K_c}{\beta_i}+1\right]\geq D(t)\ \forall i,t\\
\Rightarrow &\frac{r_i(t)}{2(1+\epsilon/D_{min})}\left[\frac{c_i(t)K_c\ln(1+N\frac{D_{max}}{\epsilon})}{\beta_i\ln(1+N\frac{D_{max}}{\epsilon})}+1\right]\geq D(t)\ \forall i,t\\
\Rightarrow &\frac{r_i(t)}{2(1+\epsilon/D_{min})}\left[\frac{c_i(t)\eta}{\beta_i\ln(1+N\frac{D_{max}}{\epsilon})}+1\right]\geq D(t)\ \forall i,t\\
\Rightarrow &\frac{\eta r_i(t)}{2\eta(1+\epsilon/D_{min})}\bigg[\frac{c_i(t)\eta}{\beta_i}\frac{\beta_i}{\eta}\ln\left(\frac{\tilde{z}_i(t)+\epsilon/N}{\tilde{s}_i(t-1)+\epsilon/N}\right)/\ln(1+N\frac{D_{max}}{\epsilon})\\
&+\frac{\beta_i}{\eta}\ln\left(\frac{\tilde{s}_i(t)-r_i(t)+\epsilon/N}{\tilde{s}_i(t-1)+\epsilon/N}\right)\bigg]\\
&\geq \frac{\beta_i}{\eta}\ln\left(\frac{\tilde{z}_i(t)+\epsilon/N}{\tilde{s}_i(t-1)+\epsilon/N}\right)D(t) \ \ \forall i\in \mathbb{L}_t,t \\
\Rightarrow &\frac{\eta r_i(t)}{1+\eta(1+\epsilon/D_{min})}\left[c_i(t)+\frac{\beta_i}{\eta}\ln\left(\frac{\tilde{s}_i(t)-r_i(t)+\epsilon/N}{\tilde{s}_i(t-1)+\epsilon/N}\right)\right]\\
&\geq \frac{\beta_i}{\eta}\ln\left(\frac{\tilde{z}_i(t)+\epsilon/N}{\tilde{s}_i(t-1)+\epsilon/N}\right)D(t) \ \ \forall i\in \mathbb{L}_t,t \\
\Rightarrow & \frac{\eta\sum_t\sum_{i\in \mathbb{L}_t}r_i(t)\left[c_i(t)+\frac{\beta_i}{\eta}\ln\left(\frac{\tilde{s}_i(t)-r_i(t)+\epsilon/N}{\tilde{s}_i(t-1)+\epsilon/N}\right)\right]}{1+\eta(1+\epsilon/D_{min})}\\
&\geq \sum_t\sum_{i\in \mathbb{L}_t}\frac{\beta_i}{\eta}\ln\left(\frac{\tilde{z}_i(t)+\epsilon/N}{\tilde{s}_i(t-1)+\epsilon/N}\right)D(t)\\
\myrightarrowa &\frac{P}{1+\eta(1+\epsilon/D_{min})}\geq \sum_t\max_i\frac{\beta_i}{\eta}\ln\left(\frac{\tilde{z}_i(t)+\epsilon/N}{\tilde{s}_i(t-1)+\epsilon/N}\right)D(t)
\end{align*}
where (a) holds by following reasons:
\begin{enumerate} 
\item $j_t\in \mathbb{L}_t$ where $j_t=\text{argmax}_i\frac{\beta_i}{\eta}\ln\left(\frac{\tilde{z}_i(t)+\epsilon/N}{\tilde{s}_i(t-1)+\epsilon/N}\right)$ $\forall t$
\item $\frac{\beta_i}{\eta}\ln\left(\frac{\tilde{z}_i(t)+\epsilon/N}{\tilde{s}_i(t-1)+\epsilon/N}\right)=0$ $\forall i\notin \mathbb{L}_t$ $\forall t$
\end{enumerate}
Then, combining the two cases above, we have (\ref{eqa:thmlambda}).
\end{proof}

\end{document}